\newtheorem{theorem}{Theorem}[section]
\newtheorem{lemma}[theorem]{Lemma}
\newtheorem{definition}[theorem]{Definition}
\newtheorem{assumption}[theorem]{Assumption}
\newtheorem{remark}[theorem]{Remark}
\let\NAT@parse\undefined
\def\BibTeX{{\rm B\kern-.05em{\sc i\kern-.025em b}\kern-.08em
    T\kern-.1667em\lower.7ex\hbox{E}\kern-.125emX}}
\begin{document}

\title{Model-free Resilient Controller Design based on Incentive Feedback Stackelberg Game and Q-learning
}

\author{\IEEEauthorblockN{1\textsuperscript{st} Jiajun Shen}
\IEEEauthorblockA{\textit{The Department of Mechanical Engineering} \\
\textit{The University of Kansas}\\
Lawrence, KS 44906, USA \\
sjjvic@ku.edu}
\and
\IEEEauthorblockN{2\textsuperscript{nd} Fengjun Li}
\IEEEauthorblockA{\textit{The Department of Electrical Engineering and Computer Science} \\
\textit{The University of Kansas}\\
Lawrence, KS 44906, USA \\
fli@ku.edu}
\and
\IEEEauthorblockN{3\textsuperscript{rd} Morteza Hashemi}
\IEEEauthorblockA{\textit{The Department of Electrical Engineering and Computer Science} \\
\textit{The University of Kansas}\\
Lawrence, KS 44906, USA \\
mhashemi@ku.edu}
\and
\IEEEauthorblockN{4\textsuperscript{th} Huazhen Fang}
\IEEEauthorblockA{\textit{The Department of Mechanical Engineering} \\
\textit{The University of Kansas}\\
Lawrence, KS 44906, USA \\
fang@ku.edu}
}

\maketitle

\begin{abstract}
In the swift evolution of Cyber-Physical Systems (CPSs) within intelligent environments, especially in the industrial domain shaped by Industry 4.0, the surge in development brings forth unprecedented security challenges. This paper explores the intricate security issues of Industrial CPSs (ICPSs), with a specific focus on the unique threats presented by intelligent attackers capable of directly compromising the controller, thereby posing a direct risk to physical security. Within the framework of hierarchical control and incentive feedback Stackelberg game, we design a resilient leading controller (leader) that is adaptive to a compromised following controller (follower) such that the compromised follower acts cooperatively with the leader, aligning its strategies with the leader's objective to achieve a team-optimal solution. First, we provide sufficient conditions for the existence of an incentive Stackelberg solution when system dynamics are known. Then, we propose a Q-learning-based Approximate Dynamic Programming (ADP) approach, and corresponding algorithms for the online resolution of the incentive Stackelberg solution without requiring prior knowledge of system dynamics. Last but not least, we prove the convergence of our approach to the optimum.
\end{abstract}

\begin{IEEEkeywords}
Industrial Cyber-Physical Systems;  Incentive feedback Stackelberg Game; Resilient control; Q-learning; Approximate dynamic programming.
\end{IEEEkeywords}

\section{Introduction}

The exponential growth of smart and intelligent environments has catalyzed the rapid development of Cyber-Physical Systems (CPSs). Among the various applications of CPSs, industrial environments, including manufacturing \cite{dafflon2021challenges}, chemical production processes \cite{ji2016study}, and smart grids \cite{zhang2021smart}, stand out as the most important components of the fourth industrial revolution, known as Industry 4.0 \cite{dafflon2021challenges}.

\subsection{ICPS Security and Resilient Control}

Despite the evident advantages of integrating cyber and physical components for enhanced production efficiency, the security challenges associated with Industrial CPSs (ICPSs) have become increasingly intricate. Recent research has been focusing on securing the cyber layer, cyber-physical interactions, and the physical layer. Defending against cyber threats can be referred to as the traditional cybersecurity experiences, addressing issues like DNS hijacking, IP spoofing \cite{narula2022novel}, and SSH password attacks \cite{evers2017security}. Defending against cyber-physical threats can be referred to as addressing issues like information leakage and software update manipulation attacks (malicious files) against Gateway devices and web servers \cite{elhabashy2019cyber}.

Research on physical layer security primarily focuses on secure state estimation and resilient control. The main focus of secure state estimation is the design of estimators and filters \cite{kazemi2021finite, li2023efficient}.

Resilient control strategies aim to enable systems to recover from unforeseen adverse situations. Various studies have designed resilient controllers to tackle attacks ranging from jamming and Denial of Service (DoS) attacks to actuator and sensor attacks. Specifically, \cite{yuan2013resilient} designed a resilient controller for malicious jamming and DoS attacks on the communication channel. Based on the work of \cite{yuan2013resilient}, the resilient control under an intelligent DoS attacker (time-varying attack rates) is discussed in \cite{yuan2016resilient}. \cite{sun2019resilient} considered a DoS attacker targeting at blocking the controller-to-actuator (C-A) communication channel by launching adversarial jamming signals. Then, \cite{zhao2021adaptive} further considered both actuator attacks and sensor attacks and designed resilient controllers based on complex nonlinear system models caused by the unknown actuator and sensor attacks.

While existing research emphasizes resilience against communication channel delays, practical scenarios often involve intelligent attackers who are experts in reverse engineering. These attackers can compromise controllers by manipulating control codes, e.g., they can decode the running control policy/strategy, and they can also compromise the controller by injecting the malicious control codes \cite{cook2023survey}. This poses a unique challenge to the security of ICPS.
 
Consider a system with two controllers, such as the hierarchical control framework in ICPS deployment \cite{ghaeini2016hamids}. In this setup, a discrete control system (DCS) controller in the process control layer collaborates with a programmable logic controller (PLC) in the Fieldbus layer \footnote{Both process control layer and Fieldbus layer belong to the physical layer.}. The intelligent attacker can compromise PLC by injecting malicious but legitimate control codes to achieve arbitrary targets. This type of attack is particularly stealthy and harmful, posing challenges to detection mechanisms and defense strategies, for two reasons: 1) The target of intelligent attacks is mostly performance degradation in the long run, and thus the malicious control code is legitimate, and will not cause any operational abnormalcy; 2) Even when we can recognize the compromised PLC, it is still hard to mitigate its influence, since it is impractical to shut down production to refresh the control code considering the economic loss.

Therefore, regarding the physical security of ICPS, a resilient controller is required to not only bring the system back to desirable performance but also be adaptive to the attacker's different targets.

\subsection{Incentive Feedback Stackelberg Game}

The Stackelberg game, a pivotal tool for hierarchical decision problems, originated as a solution for static economic competition \cite{hicks1935marktform}. However, in most Stackelberg games, the leader may not face his most desirable outcome. In addressing this issue, so-called incentive mechanisms have been introduced to align the follower's optimum with the leader's desire.

In the context of ICPS, the DCS controller and PLC can be considered as the leader and follower in the Stackelberg game. When the PLC is uncompromised, the Stackelberg game simplifies into a joint optimization problem. However, in the event of PLC compromise, the problem transforms into \textit{designing an incentive strategy for the DCS controller to achieve its target despite the compromised PLC}.

The incentive Stackelberg game relies on the leader proposing a reward or penalty to the follower, altering the structure of the follower's optimization problem to induce a strategy aligned with the leader's desire, which is called a team-optimal solution. The author in \cite{ho1982control} suggested an incentive form $u_{\text{leader}} = u_{\text{leader}}^t + M (u_{\text{follower}} - u_{\text{follower}}^t)$ where superscript $t$ indicates the team-optimal solution, and $M$ is the appropriate incentive matrix. $(u_{\text{follower}} - u_{\text{follower}}^t)$ can be viewed as the ``penalty'' term for the follower for its deviation from the team-optimal solution. Based on this form, \cite{ahmed2017constrained, mukaidani2020robust} considered the state-feedback strategy, and investigated the incentive Stackelberg games with $H_{\infty}$ constraints, with one leader and multiple followers, and with Markovian jumps in dynamics, under both discrete-time and continuous-time settings. \cite{basar1979closed, li2002approach, lin2022incentive} studied different representations of incentive strategy by considering different forms of ``penalty'' term. They presented sufficient conditions for the incentive matrix $M$ under both deterministic and stochastic systems. However, all these researches considered the model-based and offline setting, i.e., they all require the knowledge of precise system dynamics. Besides, the matrix $M$ can only be derived by solving the complex matrix equations (e.g., cross-coupled Riccati equations), which is computationally inefficient.

Our study focuses on the incentive feedback Stackelberg game for discrete-time deterministic systems, employing a Q-learning-based approximate dynamic programming (ADP) approach. Unlike previous research, our contributions include 1) deriving a closed form for the incentive matrix, 2) developing a model-free (online) approach for team-optimal solutions and incentive matrix derivation, and 3) proving convergence to the optimum.

In the subsequent sections, we formally introduce the problem formulation in Section \ref{section: Problem Formulation}, solve the incentive feedback Stackelberg game with known dynamics in Section \ref{section: Incentive Feedback Stackelberg Game with Known Dynamics}, and present a model-free approach using Q-learning-based ADP. Section \ref{section: Q-learning-based ADP} is devoted to developing a model-free approach to derive a team-optimal solution and closed-form incentive matrix $M$ without the knowledge of system dynamics. Two corresponding algorithms and the proofs of the convergence to the optimum are given. Finally, we conclude by summarizing our contributions and outlining potential future directions in Section \ref{section: Conclusions}.


\textbf{Notations:} $\mathbb{E}(\cdot)$ is the mathematical expectation operator, $\mathbb{R}^{n}$ is the space of all real $n$-dimensional vectors, $\mathbb{R}^{m \times p}$ is the space of all $m \times p$ real matrices, $(\cdot)^T$ indicates the transpose operation, $M > 0$ and $M \geq 0$ indicates that matrix $M$ is positive definite and positive semi-definite, $\|\cdot\|_F$ indicates the Frobenius norm.

\section{Problem Formulation}
\label{section: Problem Formulation}

Consider the discrete-time systems governed by the following difference equation
\begin{equation}
    \label{eqn: dynamics}
    x_{k+1} = Ax_k + B_1u_k + B_2v_k,
\end{equation}
where $x_k \in X \subseteq \mathbb{R}^n$ is the system state, $u_k \in U \subseteq \mathbb{R}^{m_1}$ is controller $1$'s input, $v_k \in V \subseteq \mathbb{R}^{m_2}$ controller $2$'s input, $A$, $B_1$, and $B_2$ are matrices of appropriate dimensions. All this information and the value of the initial state, $x_0$ are known to both players.

\begin{assumption}
    \label{assumption: strategy information structure}
    All the controllers employ the closed-loop memoryless policies, i.e., $u_k = \pi_1(k, x_0, x_k)$, $v_k = \pi_2(k, x_0, x_k)$ \cite{bacsar1998dynamic}. In addition, the linear closed-loop memoryless Stackelberg strategy has the following form \cite{bacsar1998dynamic, medanic1978closed}:
    \begin{equation}
        \pi_i(k, x_0, x_k) = K_i x_k, \ i=1,2,
    \end{equation} where $K_1 \in \mathbb{R}^{m_1 \times n}$, $K_2 \in \mathbb{R}^{m_2 \times n}$ are matrices with appropriate dimensions.
\end{assumption}

Consider the state-feedback policy for controller $1$ (resp. controller $2$) $\pi_1 \in \Pi_1:\mathbb{R}^n \rightarrow \mathbb{R}^{m_1}$ (resp. $\pi_2 \in \Pi_2:\mathbb{R}^n \rightarrow \mathbb{R}^{m_2}$) where $\Pi_1$ and $\Pi_2$ are sets of admissible policies, and specifically are of state-feedback form, i.e., $u_k = \pi_1(x_k)=K_1 x_k$, and $v_k = \pi_2(x_k)=K_2 x_k$.

The infinite-horizon cost functions of controller $1$ and $2$ are given respectively by
\begin{equation}
    \label{eqn: cost function of leader}
    J_1(\pi_1,\pi_2) = \sum_{k=0}^{\infty} \gamma^k (x_k^T Q_1 x_k + u_k^T R_{11} u_k + v_k^T R_{12} v_k),
\end{equation}
\begin{equation}
    \label{eqn: cost function of follower}
    J_2(\pi_1,\pi_2) = \sum_{k=0}^{\infty} \gamma^k (x_k^T Q_2 x_k + u_k^T R_{21} u_k + v_k^T R_{22} v_k),
\end{equation}
where $Q_1 = Q_1^T \geq 0$, $Q_2 = Q_2^T \geq 0$, $R_{11} = R_{11}^T > 0$, $R_{12} = R_{12}^T > 0$, $R_{21} = R_{21}^T > 0$, and $R_{22} = R_{22}^T > 0$ are known coefficient matrices, $\gamma \in (0,1)$ is the discount factor.

Define $c_{i,k} := c_i(x_k, u_k, v_k) = x_k^T Q_i x_k + u_k^T R_{i1} u_k + v_k^T R_{i2} v_k$, as the one-step cost at $k$-th step for both controllers where $i=1,2$, and $c_i$ is a cost function.

Given the policies of controller $1$ and $2$, $\pi=\{\pi_1, \pi_2\}$, the state-value functions $V_1^{\pi}: \mathbb{R}^n \rightarrow \mathbb{R}$, $V_2^{\pi}: \mathbb{R}^n \rightarrow \mathbb{R}$, and the action-value functions $Q_1^{\pi}: \mathbb{R}^n \times \mathbb{R}^{m_1} \rightarrow \mathbb{R}$, $Q_2^{\pi}: \mathbb{R}^n \times \mathbb{R}^{m_2} \rightarrow \mathbb{R}$ are defined as
\begin{equation}
    \label{eqn: value function of leader}
    V_i^{\pi}(x_k)=\min_{\substack{u_k,u_{k+1},\cdots \\ v_k,v_{k+1},\cdots}} \sum_{j=0}^{\infty} \gamma^{k+j} c_{i,k+j},
\end{equation}
\begin{equation}
    \label{eqn: action-value function of leader}
    Q_i^{\pi}(x_k, u_k, v_k)= c_{i,k} + \min_{\substack{u_{k+1},u_{k+2},\cdots \\ v_{k+1},v_{k+2},\cdots}} \sum_{j=0}^{\infty} \gamma^{k+j} c_{i,k+j}.
\end{equation}

Consider the controllers as two players in the Stackelberg game setting. Without loss of generality, we assume controller $1$ as leader, and controller $2$ as follower. Then, the Stackelberg solution should satisfy:
\begin{equation}
    \label{eqn: definition of Stackelberg solution}
    J_1(\pi_1^*, \pi_2^*) := J_1(\pi_1^*, R_2(\pi_1^*)) = \min_{\pi_1} J_1(\pi_1, R_2(\pi_1)),
\end{equation}
where $R_2(\pi_1) = \{\pi \in \Pi_2: J_2(\pi_1, \pi) \leq J_2(\pi_1, \pi_2), \forall \pi_2 \in \Pi_2\}$ is the rational reaction set of the follower.

\begin{assumption}
    \label{assumption: information structure}
    The leader has access to the follower's strategy, i.e., the leader has access to policy $\pi_2$.
\end{assumption}

\begin{definition}
    \label{def: team-optimal solution}
    A strategy pair $(\pi_1^t, \pi_2^t)$ is called the team-optimal solution of the game if \begin{equation}
        \label{eqn: team-optimal solution inequality}
        J_1(\pi_1^t, \pi_2^t) \leq J_1(\pi_1, \pi_2), \ \forall \pi_1 \in \Pi_1 \ \text{and} \ \forall \pi_2 \in \Pi_2.
    \end{equation}
\end{definition}

\begin{remark}
    The team-optimal solution $(\pi_1^t, \pi_2^t)$ can only be achieved when both players act ``cooperatively''. In other words, the follower would help the leader achieve the leader's desired target, i.e., minimizing $J_1$, while achieving his own desired target, i.e., minimizing $J_2$. Consider a special and the ideal case (for leader) when $Q_1 = Q_2$, $R_{11} = R_{21}$, and $R_{12} = R_{22}$, and thus the targets of leader and follower collapse to the same one. In this case, the team-optimal solution is guaranteed to be consistent with the incentive Stackelberg solution. However, in most Stackelberg games, the team-optimal solution is hard to achieve due to the follower's different desired target, which would result in the gap between $\pi_2^*$ and $\pi_2^t$.
\end{remark}

In this paper, we adopt a similar incentive form as suggested in \cite{ho1982control, ahmed2017constrained, mukaidani2018incentive, mukaidani2020robust}, $u_k = u_k^t + M(v_k - v_k^t)$ where $M$ is the incentive matrix to be determined, and superscript $t$ represents the team-optimal value. The second term on the right-hand side (RHS) can be viewed as a punishment for the follower's deviation from the team-optimal solution, $v_k^t$.

\section{Incentive Feedback Stackelberg Game with Known System Dynamics}
\label{section: Incentive Feedback Stackelberg Game with Known Dynamics}

In this section, we introduce how to design an incentive feedback Stackelberg strategy that achieves the team optimum, given known dynamics. We first derive the team-optimal solution by Lemma \ref{lemma: team-optimal}.

\begin{lemma}
    \label{lemma: team-optimal}
    Given Assumption \ref{assumption: strategy information structure} is satisfied, the joint optimization problem
    \begin{equation}
        \label{eqn: joint optimization}
        \begin{aligned}
           &\min J_1(\pi_1,\pi_2) \\
           &= \sum_{k=0}^{\infty} \gamma^k (x_k^T Q_1 x_k + u_k^T R_{11} u_k + v_k^T R_{12} v_k), \\
           & u_k = \pi_1(x_k), v_k = \pi_2(x_k),\\ 
           & \text{s.t.} \ x_{k+1} = Ax_k + B_1u_k + B_2v_k,
        \end{aligned}
    \end{equation} admits a unique team-optimal solution $\{\pi_1^t, \pi_2^t\}$
    \begin{equation}
        \label{eqn: leader team-optimal solution}
        u_k^t = \pi_1^t (x_k) = -K_1 x_k, 
    \end{equation}
    \begin{equation}
        \label{eqn: follower team-optimal solution}
        v_k^t = \pi_2^t (x_k) = -K_2 x_k,
    \end{equation} and with minimum cost $J_1^t = x_0^T P x_0$, where
    \begin{equation}
        \label{eqn: K_i}
        K_i = \gamma{(R_{1i} + \gamma F_i B_i)}^{-1} F_i A,
    \end{equation}
    \begin{equation}
        \label{eqn: F_i}
        \begin{aligned}
             F_i = B_i^T P \big[I - \gamma B_j{[R_{1j} + \gamma B_j^T P B_j]}^{-1} B_j^T P\big],& \\
             \ i,j=1,2, \ i \neq j&
        \end{aligned}
    \end{equation}
    \begin{equation}
        \label{eqn: ARE of team-optimal}
        \begin{aligned}
            P &= Q_1 + \gamma {(A - B_1 K_1-B_2 K_2)}^T P (A - B_1 K_1 - B_2 K_2) \\
            & + K_1^T R_{11} K_1 + K_2^T R_{12} K_2.
        \end{aligned}
    \end{equation}
\end{lemma}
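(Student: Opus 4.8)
The plan is to recognize the joint problem \eqref{eqn: joint optimization} as a standard discounted linear--quadratic regulator (LQR) problem with a stacked control input, solve it by dynamic programming, and then unpack the resulting stacked feedback gain into the block form claimed in the statement. Concretely, I would set $w_k = \big[\, u_k^T \ \ v_k^T \,\big]^T$, $B = \big[\, B_1 \ \ B_2 \,\big]$, and $R = \mathrm{diag}(R_{11}, R_{12})$, which is symmetric positive definite since both $R_{11}$ and $R_{12}$ are. Then the dynamics read $x_{k+1} = A x_k + B w_k$ and the cost in \eqref{eqn: joint optimization} becomes $\sum_{k=0}^{\infty} \gamma^k\big(x_k^T Q_1 x_k + w_k^T R w_k\big)$, i.e.\ a textbook discounted LQR, where the restriction to linear state feedback in Assumption \ref{assumption: strategy information structure} is consistent with the known optimal structure.

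Next I would invoke the standard LQR/dynamic-programming result: under the (standing, implicit) stabilizability of $(\sqrt{\gamma}A, \sqrt{\gamma}B)$ and detectability of $(\sqrt{\gamma}A, Q_1^{1/2})$, the Bellman equation $V(x) = \min_w\{ x^T Q_1 x + w^T R w + \gamma V(Ax + Bw)\}$ has the unique solution $V(x) = x^T P x$ with $P = P^T \ge 0$ the stabilizing solution of the discrete algebraic Riccati equation $P = Q_1 + \gamma A^T P A - \gamma^2 A^T P B\,(R + \gamma B^T P B)^{-1} B^T P A$; the minimizing policy is the linear state feedback $w_k^t = -(R + \gamma B^T P B)^{-1}\gamma B^T P A\, x_k$, it is unique because $R>0$ makes the per-stage objective strictly convex in $w_k$, and the optimal cost equals $x_0^T P x_0$. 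It then remains to translate this stacked form into \eqref{eqn: leader team-optimal solution}--\eqref{eqn: ARE of team-optimal}. Writing the gain $K = \big[\, K_1^T \ \ K_2^T\,\big]^T$ as the solution of the normal equations $(R + \gamma B^T P B) K = \gamma B^T P A$, and noting that $R + \gamma B^T P B$ has diagonal blocks $R_{1i} + \gamma B_i^T P B_i$ and off-diagonal block $\gamma B_1^T P B_2$, I would eliminate one block at a time (Schur complement): solving for $K_2$ in terms of $K_1$ and back-substituting yields $(R_{11} + \gamma F_1 B_1) K_1 = \gamma F_1 A$ with $F_1$ exactly as in \eqref{eqn: F_i}, hence \eqref{eqn: K_i} for $i=1$, and symmetrically for $i=2$. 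The relevant Schur complements $R_{1i} + \gamma F_i B_i$ are invertible because $R + \gamma B^T P B > 0$. Finally, substituting $w_k^t = -Kx_k$ back into the Riccati equation and using that $R$ is block-diagonal (so $K^T R K = K_1^T R_{11} K_1 + K_2^T R_{12} K_2$, with no cross term) and $BK = B_1 K_1 + B_2 K_2$ rewrites the DARE in the closed-loop form \eqref{eqn: ARE of team-optimal}. Setting $u_k^t = -K_1 x_k$ and $v_k^t = -K_2 x_k$, which are admissible by Assumption \ref{assumption: strategy information structure}, completes the identification.

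The routine part is the LQR machinery and the block algebra; the two places that need care are: (i) making the existence and uniqueness of the stabilizing $P$ rigorous, which genuinely relies on the stabilizability/detectability hypotheses the statement leaves implicit --- I would either promote them to standing assumptions or argue via the monotone convergence of value iteration, exploiting $\gamma \in (0,1)$ and the quadratic, positive-(semi)definite structure; and (ii) performing the block inversion carefully enough to land on the precise, seemingly asymmetric formulas \eqref{eqn: F_i}--\eqref{eqn: K_i}, where the asymmetry in the indices $i, j$ merely records which block is eliminated first, so both orderings must be checked to reproduce the stated $K_1$ and $K_2$. I expect (ii) to be the main obstacle in the write-up, since it is the step where the clean stacked solution must be matched term-by-term to the paper's notation.
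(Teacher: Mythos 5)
Your proposal is correct and arrives at exactly the stated formulas, but it takes a somewhat different route from the paper. The paper proceeds by \emph{person-by-person} optimality: it fixes $v_k=-K_2x_k$, solves the resulting single-input discounted LQR for the leader to get $K_1 = \gamma(R_{11}+\gamma B_1^TPB_1)^{-1}B_1^TP(A-B_2K_2)$, does the symmetric step for the follower, and then cross-substitutes the two gain expressions to obtain \eqref{eqn: K_i}--\eqref{eqn: F_i}, finally massaging the single-player ARE into the closed-loop form \eqref{eqn: ARE of team-optimal}. You instead stack the inputs into $w_k=[u_k^T\ v_k^T]^T$ with $B=[B_1\ B_2]$ and $R=\mathrm{diag}(R_{11},R_{12})$, solve one joint LQR, and recover $K_1,K_2$ by block elimination of the normal equations $(R+\gamma B^TPB)K=\gamma B^TPA$. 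These are algebraically the same computation --- the two block rows of your normal equations are precisely the paper's two PBP gain conditions, and your Schur-complement elimination is the paper's cross-substitution --- so the index asymmetry in \eqref{eqn: F_i} falls out identically either way. What your route buys is a cleaner logical footing: minimizing jointly over $w_k$ with $R>0$ gives strict convexity of the per-stage cost in the stacked input, hence genuine team optimality and uniqueness of the minimizer directly, whereas the paper's argument as written only establishes person-by-person stationarity (which coincides with team optimality here precisely because of the joint convexity you exploit, a point the paper leaves implicit). You also correctly flag that existence and uniqueness of the stabilizing $P\ge 0$ rests on stabilizability/detectability of $(\sqrt{\gamma}A,\sqrt{\gamma}B)$ and $(\sqrt{\gamma}A,Q_1^{1/2})$, hypotheses the paper uses tacitly. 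What the paper's route buys is that the two single-player subproblems \eqref{eqn: K_1}--\eqref{eqn: K_2} it constructs are reused verbatim in the later analysis of the follower's incentivized problem, so the PBP decomposition is not purely cosmetic there.
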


\begin{proof}
    Since the state value function is quadratic and policies are state-feedback, we have $V_1^{\pi^t}(x_t) = x_t^T P x_t$, where $P = P^T \geq 0$, and $\pi^t = \{\pi_1^t, \pi_2^t\}$ is the team-optimal strategy. Also, we have the Bellman equation
    \begin{equation}
        \begin{aligned}
            &V_1^{\pi^t}(x_k) = \min_{u_k, v_k} \big(x_k^T Q_1 x_k + u_k^T R_{11} u_k + v_k^T R_{12} v_k \\
            &+ \gamma V_1^{\pi^t}(Ax_k + B_1u_k + B_2v_k)\big)
        \end{aligned}
    \end{equation}

    We begin with the derivation of the person-by-person (PBP) optimal solution of the joint optimization problem \eqref{eqn: joint optimization}. 
    
    First, given \eqref{eqn: follower team-optimal solution}, we have the following standard optimal control problem
    \begin{equation}
        \label{eqn: leader optimal control problem}
        \begin{aligned}
           &\min J_1(\pi_1) = \sum_{k=0}^{\infty} \gamma^k x_k^T [Q_1 + K_2^T R_{12} K_2] x_k + u_k^T R_{11} u_k, \\
           & u_k = \pi_1(x_k), \\
           & \text{s.t.} \ x_{k+1} = (A - B_2K_2)x_k + B_1u_k.
        \end{aligned}
    \end{equation}

    The optimal state-feedback strategy of the leader is given by
    \begin{equation}
        \label{eqn: K_1}
        K_1 = \gamma {(R_{11} + \gamma B_1^T P B_1)}^{-1} B_1^T P (A - B_2K_2).
    \end{equation}
    Similarly, we can derive the following optimal state-feedback strategy of the follower given \eqref{eqn: leader team-optimal solution}
    \begin{equation}
        \label{eqn: K_2}
        K_2 = \gamma {(R_{12} + \gamma B_2^T P B_2)}^{-1} B_2^T P (A - B_1K_1).
    \end{equation}

    By substituting \eqref{eqn: K_2} into \eqref{eqn: K_1} and doing some calculation, we have
    \begin{equation}
        (R_{11} + \gamma F_1B_1) K_1 = \gamma F_1 A,
    \end{equation}
    which gives us the case of $i=1, j=2$ in \eqref{eqn: K_i}. Similarly, substituting \eqref{eqn: K_1} into \eqref{eqn: K_2} would give us the case of $i=2, j=1$.

    For the standard optimal control problem \eqref{eqn: leader optimal control problem}, we have the following algebraic Riccati equation (ARE)
    \begin{equation}
        \label{eqn: ARE of leader optimal control}
        \begin{aligned}
            &\gamma {(A - B_2K_2)}^T P (A - B_2K_2) - P + Q_1 + K_2^T R_{12} K_2 \\
            & - \gamma {(A - B_2K_2)}^T P B_1 K_1 = 0,
        \end{aligned}
    \end{equation}
    where 
    \begin{equation}
        \begin{aligned}
            & \gamma {(A - B_2K_2)}^T P (A - B_2K_2) - \gamma {(A - B_2K_2)}^T P B_1 K_1\\
            &= \gamma {(A - B_1K_1 - B_2K_2)}^T P (A - B_1K_1 - B_2K_2) \\
            & + \gamma K_1^T B_1^T P (A- B_2K_2) - \gamma K_1^T B_1^T B_1 K_1\\
            & = \gamma {(A - B_1K_1 - B_2K_2)}^T P (A - B_1K_1 - B_2K_2) \\
            &+ K_1^T R_{11} K_1,
        \end{aligned}
    \end{equation} which leads to \eqref{eqn: ARE of team-optimal}.
\end{proof}

Then, the leader is supposed to announce the incentive strategy, $u_k = u_k^t + M(v_k - v_k^t)$, in advance to the follower. Accordingly, the follower needs to solve for his own optimal strategy.

\begin{lemma}
    \label{lemma: follower optimal strategy}
    Given the leader's incentive strategy $u_k = u_k^t + M(v_k - v_k^t)$, the follower's optimization problem is defined as
    \begin{equation}
        \label{eqn: follower optimal control problem under incentive}
        \begin{aligned}
       &\min J_2(\pi_2) = \sum_{k=0}^{\infty} \gamma^k x_k^T Q_M x_k + 2 x_k^T R_{1,M} v_k + v_k^T R_{2,M} v_k, \\
       & v_k = \pi_2(x_k), \\
       & \text{s.t.} \ x_{k+1} = A_M x_k + B_M v_k,
        \end{aligned}
    \end{equation} where $Q_M:=Q_2 + K_1^T R_{21} K_1 - 2K_1^T R_{21} M K_2 + K_2^T M^T R_{21} M K_2$, $R_{1,M} := -2K_1^T R_{21} M + 2K_2^T M^T R_{21} M$ , $R_{2,M} := R_{22} + M^T R_{21} M$, $A_M := A - B_1K_1 + B_1MK_2$, $B_M := B_1M +B_2$, and $K_1$, $K_2$ satisfy \eqref{eqn: K_i}, \eqref{eqn: F_i}, and \eqref{eqn: ARE of team-optimal}.

    It admits a unique optimal solution
    \begin{equation}
        \label{eqn: follower optimal solution}
        v_k^* = - {(R_{2,M} + \gamma B_M^T P_v B_M)}^{-1} {(R_{1,M} + \gamma A_M^T P B_M)}^T x_k
    \end{equation} where $P_v$ satisfies the following ARE
    \begin{equation}
        \label{eqn: ARE M version}
        \begin{aligned}
            P_v =& Q_M + \gamma A_M^T P_v A_M - [R_{1,M} + \gamma A_M^T P_v B_M] \\
            &\cdot {[R_{2,M} + \gamma B_M^T P_v B_M]}^{-1} {[R_{1,M} + \gamma A_M^T P_v B_M]}^T.
        \end{aligned}
    \end{equation}
\end{lemma}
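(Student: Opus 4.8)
The plan is to reduce the follower's problem to a standard infinite-horizon discounted LQR problem by substituting the announced incentive law, and then to solve that LQR problem by dynamic programming.

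First I would substitute the team-optimal expressions \eqref{eqn: leader team-optimal solution}--\eqref{eqn: follower team-optimal solution} into the incentive law, obtaining $u_k = u_k^t + M(v_k - v_k^t) = (-K_1 + M K_2)x_k + M v_k$. Plugging this into the dynamics \eqref{eqn: dynamics} gives $x_{k+1} = (A - B_1 K_1 + B_1 M K_2)x_k + (B_1 M + B_2)v_k = A_M x_k + B_M v_k$, and plugging it into the running cost of \eqref{eqn: cost function of follower} and expanding $u_k^T R_{21} u_k$ collects the state--state, state--input, and input--input contributions into $x_k^T Q_M x_k + 2x_k^T R_{1,M} v_k + v_k^T R_{2,M} v_k$ with precisely the $Q_M$, $R_{1,M}$, $R_{2,M}$ of the statement (using that only the symmetric part of the cross terms matters inside $x_k^T Q_M x_k$). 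This establishes \eqref{eqn: follower optimal control problem under incentive}; I would also record that $R_{2,M} = R_{22} + M^T R_{21} M \ge R_{22} > 0$, so the input weight is automatically positive definite.

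Next I would solve the reduced problem by dynamic programming. Since it is an autonomous discounted LQR with no affine forcing, the optimal value function is quadratic, $V_2^{\pi_2^*}(x_k) = x_k^T P_v x_k$ with $P_v = P_v^T \ge 0$, and satisfies the Bellman equation $x_k^T P_v x_k = \min_{v_k}\big(x_k^T Q_M x_k + 2x_k^T R_{1,M} v_k + v_k^T R_{2,M} v_k + \gamma (A_M x_k + B_M v_k)^T P_v (A_M x_k + B_M v_k)\big)$. The bracketed expression is quadratic in $v_k$ with Hessian $2(R_{2,M} + \gamma B_M^T P_v B_M) > 0$ (again because $R_{2,M} > 0$ and $P_v \ge 0$), so the minimizer is the unique stationary point; setting the gradient to zero and solving $(R_{2,M} + \gamma B_M^T P_v B_M)v_k + (R_{1,M} + \gamma A_M^T P_v B_M)^T x_k = 0$ yields the linear feedback \eqref{eqn: follower optimal solution}, consistent with Assumption \ref{assumption: strategy information structure}. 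Substituting $v_k^*$ back into the Bellman equation and completing the square then forces the coefficient matching that is exactly the ARE \eqref{eqn: ARE M version}.

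The main obstacle is not this algebra but the existence/uniqueness claim behind it: the quadratic ansatz, the finiteness of $J_2$, and the optimality of $v_k^*$ among all admissible $\pi_2 \in \Pi_2$ (not merely linear ones) all rest on \eqref{eqn: ARE M version} having a stabilizing solution $P_v \ge 0$. To obtain it I would eliminate the cross term via the change of variable $v_k \mapsto v_k - R_{2,M}^{-1} R_{1,M}^T x_k$, turning the problem into a standard discounted LQR with state matrix $A_M - B_M R_{2,M}^{-1} R_{1,M}^T$ and state weight $Q_M - R_{1,M} R_{2,M}^{-1} R_{1,M}^T$, and then invoke the classical discrete-time LQR existence theorem together with a standard verification argument showing that the Riccati fixed point lower-bounds $J_2(\pi_2)$ for every $\pi_2 \in \Pi_2$ and is attained at $v_k^*$. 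The delicate point is that for a general $M$ the effective state weight need not be positive semidefinite and $(A_M, B_M)$ need not be stabilizable, so a suitable $P_v$ need not exist; the lemma is therefore implicitly restricted to incentive matrices $M$ for which the reduced pair is stabilizable and detectable with positive semidefinite effective weight, and I expect either a standing assumption to that effect or a deferral of the admissible-$M$ characterization to the section where $M$ is actually designed.
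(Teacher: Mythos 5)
Your proposal is correct and follows essentially the same route the paper intends: the paper's own proof is the single sentence ``the proof follows the standard linear quadratic discrete-time regulator problem,'' and your substitution of the incentive law into the dynamics and cost, followed by the Bellman/completion-of-square derivation of the feedback gain and the cross-term ARE, is exactly the omitted standard argument (your closing caveat about needing a stabilizing $P_v \ge 0$, hence an implicit restriction on admissible $M$, is a legitimate gap in the paper's statement rather than in your proof). The only cosmetic mismatch is a factor-of-two in the paper's definition of $R_{1,M}$ relative to the coefficient $2x_k^T R_{1,M} v_k$ in the reduced cost, which your expansion makes visible and which appears to be a typo in the lemma as stated.
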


\begin{proof}
    The proof follows the standard linear quadratic discrete-time regulator problem.
\end{proof}



Now, we are ready to provide the main result of the incentive Stackelberg strategy for the leader such that the team-optimal solution can be achieved.

\begin{theorem}
    Consider the Stackelberg game captured by dynamics \eqref{eqn: dynamics} and cost functions of leader and follower, \eqref{eqn: cost function of leader} and \eqref{eqn: cost function of follower}, the team-optimal solution $(\pi_1^t, \pi_2^t)$ defined by \eqref{eqn: leader team-optimal solution}, and \eqref{eqn: follower team-optimal solution}, can be achieved if the leader chooses the incentive strategy $u_k = u_k^t + M(v_k - v_k^t)$ where
    \begin{equation}
        \label{eqn: M}
        \begin{aligned}
            M =& {\big(\gamma {(A - B_1K_1 - B_2K_2)}^T P_v B_1 - K_1^TR_{21}\big)}^{-1} \\
            &\cdot\big(K_2^T R_{22} - \gamma {[A - B_1K_1 - B_2K_2]}^T P_v B_2\big),
        \end{aligned}
    \end{equation} where $K_1$, $K_2$ satisfy \eqref{eqn: K_i}, \eqref{eqn: F_i}, \eqref{eqn: ARE of team-optimal}, and $P_v$ satisfies the following ARE
    \begin{equation}
        \label{eqn: P_v}
            \begin{aligned}
                P_v &= Q_2 + \gamma {(A - B_1 K_1-B_2 K_2)}^T P_v (A - B_1 K_1 - B_2 K_2) \\
                & + K_1^T R_{21} K_1 + K_2^T R_{22} K_2.
            \end{aligned}
    \end{equation}
\end{theorem}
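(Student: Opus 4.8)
The plan is to reduce the theorem to a statement about the follower's best response. I would show that once the leader commits to $u_k = u_k^t + M(v_k - v_k^t)$ with $M$ given by \eqref{eqn: M}, the follower's unique optimal reaction is $v_k = v_k^t = -K_2 x_k$. This is enough: along the resulting closed loop the penalty term $M(v_k - v_k^t)$ vanishes identically, so the leader in fact implements $u_k = u_k^t = -K_1 x_k$, and the realized pair is exactly $(\pi_1^t, \pi_2^t)$, which is the team-optimal solution by Lemma \ref{lemma: team-optimal}. So everything hinges on characterizing the follower's best response to the announced incentive.

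For that I would start from the reformulation in Lemma \ref{lemma: follower optimal strategy}: under the incentive the follower solves the homogeneous linear-quadratic regulator \eqref{eqn: follower optimal control problem under incentive} with data $(A_M, B_M, Q_M, R_{1,M}, R_{2,M})$. The policy $v_k = -K_2 x_k$ produces the closed-loop matrix $A_M - B_M K_2 = A - B_1 K_1 - B_2 K_2$, which I abbreviate $\bar A$; it is stabilizing in the discounted sense (otherwise the team-optimal cost $x_0^T P x_0$ in Lemma \ref{lemma: team-optimal} would be infinite), so the follower's problem is well posed and, being homogeneous quadratic, has value function $V_2(x) = x^T P_v x$ with $P_v \ge 0$. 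Writing the follower's Bellman equation
\[
x^T P_v x = \min_{v}\big[\,x^T Q_M x + 2 x^T R_{1,M} v + v^T R_{2,M} v + \gamma(A_M x + B_M v)^T P_v (A_M x + B_M v)\,\big],
\]
the bracketed expression is strictly convex in $v$ since its Hessian $2(R_{2,M} + \gamma B_M^T P_v B_M)$ is positive definite ($R_{2,M} = R_{22} + M^T R_{21} M > 0$ and $P_v \ge 0$). Hence $v = -K_2 x$ is the global minimizer if and only if it satisfies the first-order stationarity condition, which after substituting $v = -K_2 x$ and requiring it for all $x$ reads
\[
R_{2,M} K_2 - R_{1,M}^{T} - \gamma B_M^{T} P_v \bar A = 0 .
\]

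The heart of the argument is then algebraic. Expanding $R_{1,M}, R_{2,M}, A_M, B_M$ in terms of $K_1, K_2, M$ and exploiting the cancellations produced by the combination $M K_2 - K_1$ (which originates from $u_k^t = -K_1 x_k$, $v_k^t = -K_2 x_k$ in the incentive form), this condition simplifies first to $R_{22} K_2 + M^{T} R_{21} K_1 - \gamma B_M^{T} P_v \bar A = 0$ and then, after collecting the $M^{T}$ terms and transposing, to
\[
(\gamma \bar A^{T} P_v B_1 - K_1^{T} R_{21})\, M = K_2^{T} R_{22} - \gamma \bar A^{T} P_v B_2,
\]
which is precisely \eqref{eqn: M} once the coefficient matrix on the left admits an inverse. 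To finish I would verify consistency of the Bellman equation: substituting $v = -K_2 x$ back into its right-hand side and using the identity $Q_M - R_{1,M} K_2 - K_2^{T} R_{1,M}^{T} + K_2^{T} R_{2,M} K_2 = Q_2 + K_1^{T} R_{21} K_1 + K_2^{T} R_{22} K_2$ (a second short expansion), the Bellman equation collapses exactly to the linear equation \eqref{eqn: P_v}, which has a unique solution $P_v \ge 0$. Thus with the stated $M$ and this $P_v$ the quadratic form $x^T P_v x$ solves the follower's Bellman optimality equation with minimizer $-K_2 x$, and a standard LQ verification argument (using $P_v \ge 0$ and $\sqrt{\gamma}\,\bar A$ Schur) shows that $v_k^* = -K_2 x_k = v_k^t$ is the follower's unique optimal response, which completes the proof.

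I expect the main obstacle to be the two algebraic reductions above — pushing the stationarity condition down to \eqref{eqn: M}, and establishing the $Q_M$-identity — where one must keep careful track of the symmetrizations hidden in the definitions of $Q_M$ and $R_{1,M}$ and use $M K_2 - K_1$ systematically. A secondary point worth flagging is that the follower's cost matrix $P_v$ used here solves the \emph{linear} Lyapunov-type equation \eqref{eqn: P_v} rather than the Riccati equation \eqref{eqn: ARE M version} of Lemma \ref{lemma: follower optimal strategy}; the two agree exactly because the induced optimal feedback gain equals $K_2$, which is what the stationarity computation certifies. Finally, one should make explicit the standing invertibility (equivalently, rank) hypothesis on $\gamma \bar A^{T} P_v B_1 - K_1^{T} R_{21}$ required for the closed form \eqref{eqn: M} to be meaningful.
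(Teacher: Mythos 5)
Your proposal follows essentially the same route as the paper: both impose that the follower's optimal response in the incentive-modified LQR of Lemma \ref{lemma: follower optimal strategy} equals $-K_2 x_k$, reduce the resulting stationarity condition to the linear equation $R_{22}K_2 + M^T R_{21}K_1 - \gamma (B_1M+B_2)^T P_v(A-B_1K_1-B_2K_2) = 0$ (the paper's \eqref{eqn: M derivation equation}), solve it for $M$ to get \eqref{eqn: M}, and then substitute back into the follower's value equation to collapse the Riccati equation \eqref{eqn: ARE M version} into the Lyapunov-type equation \eqref{eqn: P_v}. Your version is sound and in fact slightly more complete than the paper's, since you make explicit the convexity/verification step, the closing observation that $v_k=v_k^t$ forces $u_k=u_k^t$ so team-optimality follows from Lemma \ref{lemma: team-optimal}, and the invertibility hypothesis on $\gamma(A-B_1K_1-B_2K_2)^T P_v B_1 - K_1^T R_{21}$ that the paper assumes tacitly.
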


\begin{proof}
    By equating the follower's optimal solution $v_k^*$ \eqref{eqn: follower optimal solution} with the team-optimal solution $v_k^t$ \eqref{eqn: follower team-optimal solution}, i.e., $K_2 = - {(R_{2,M} + \gamma B_M^T P_v B_M)}^{-1} {(R_{1,M} + \gamma A_M^T P B_M)}^T x_k$, and substituting the $A_M$, $B_M$ and $R_{1,M}$ as defined in Lemma \ref{lemma: follower optimal strategy}, \eqref{eqn: ARE M version} becomes
    \begin{equation}
        \label{eqn: sub1 follower ARE}
        \begin{aligned}
            &P_v - Q_2 - K_1^T R_{21} K_1 + K_1^T R_{21} M K_2 - \gamma {(A - B_1K_1)}^T \\
            & \cdot P_v [A - B_1K_1 - B_2K_2] \\
            &= \gamma {(B_1 M K_2)}^T P_v [A - B_1K_1 - B_2K_2].
        \end{aligned}
    \end{equation}

    Then, substituting the $A_M$, $B_M$ and $R_{1,M}$ into $K_2 = - {(R_{2,M} + \gamma B_M^T P_v B_M)}^{-1} {(R_{1,M} + \gamma A_M^T P B_M)}^T x_k$, we have
    \begin{equation}
        \label{eqn: M derivation equation}
        \begin{aligned}
            &R_{22} K_2 + \gamma {(B_1 M + B_2)}^T P_v [-A + B_1K_1 + B_2K_2] \\
            &= - M^T R_{21} K_1.
        \end{aligned}
    \end{equation}
    Solving \eqref{eqn: M derivation equation} for the closed-form of $M$  leads us to \eqref{eqn: M}. Then, by multiplying both sides of \eqref{eqn: M derivation equation} by $K_2^T$, we have
    \begin{equation}
        \label{eqn: sub2 follower ARE}
        \begin{aligned}
            &K_2^T M^T R_{21} K_1 + K_2^T R_{22} K_2 - \gamma K_2^T B_2^T P_v \\
            &\cdot [A - B_1K_1 - B_2K_2] \\
            &= \gamma {(B_1 M K_2)}^T P_v [A - B_1K_1 - B_2K_2].
        \end{aligned}
    \end{equation}
    Equating \eqref{eqn: sub1 follower ARE} with \eqref{eqn: sub2 follower ARE} gives us \eqref{eqn: P_v}.
\end{proof}

\section{Q-learning-based Approximate Dynamic Programming with Unknown Dynamics}
\label{section: Q-learning-based ADP}

In this section, a Q-learning-based approximate dynamic programming (ADP) approach is developed that solves the incentive Stackelberg solution for the leader online without requiring any knowledge of the system dynamics $(A, B_1, B_2)$.

\subsection{Q-function for joint optimization problem}

The optimal action-value function $Q_1^{\pi^t}$ (associated with the team-optimal solution $\pi^t=\{\pi_1^t, \pi_2^t\}$) is defined as
\begin{equation}
    \label{eqn: optimal H}
    \begin{aligned}
        & Q_1^{\pi^t}(x_k, u_k, v_k) = c_1(x_k, u_k, v_k) + \gamma V_1^{\pi^t} (x_{k+1}) \\
        & = [x_k^T \ u_k^T \ v_k^T] H {[x_k^T \ u_k^T \ v_k^T]}^T,
    \end{aligned}
\end{equation} where $H \in \mathbb{R}^{l \times l}$, $l=n+m_1+m_2$, associated with $P$ that solves \eqref{eqn: ARE of team-optimal}.

The relationship between $H$ and $P$ can be derived as
\begin{equation}
    \label{eqn: H and P equation}
    \begin{aligned}
        & [x_k^T \ u_k^T \ v_k^T] H {[x_k^T \ u_k^T \ v_k^T]}^T \\
        & = x_k^T Q_1 x_k + u_k^T R_{11} u_k + v_k^T R_{12} v_k + x_{k+1}^T P x_{k+1} \\
        & = [x_k^T \ u_k^T \ v_k^T] \textbf{diag}(Q_1, R_{11}, R_{12}) {[x_k^T \ u_k^T \ v_k^T]}^T \\
        &+ \gamma [x_k^T \ u_k^T \ v_k^T] {[A \ B_1 \ B_2]}^T P [A \ B_1 \ B_2] {[x_k^T \ u_k^T \ v_k^T]}^T.
    \end{aligned}
\end{equation}

$H$ can be written in block matrix form as
\begin{equation}
    \label{eqn: H}
    \begin{aligned}
        &\begin{bmatrix}
        H_{xx} & H_{xu} & H_{xv}\\
        H_{ux} & H_{uu} & H_{uv}\\
        H_{vx} & H_{vu} & H_{vv}
        \end{bmatrix}\\
        & = \begin{bmatrix}
        Q_1 + \gamma A^T P A & \gamma A^T P B_1 & \gamma A^T P B_2\\
        \gamma B_1^T P A & R_{11} + \gamma B_1^T P B_1 & \gamma B_1^T P B_2\\
        \gamma B_2^T P A & \gamma B_2^T P B_1 & R_{12} + \gamma B_2^T P B_2
        \end{bmatrix},
    \end{aligned}
\end{equation}

Note that, for any $x_k \in X$, we have
\begin{equation}
    \begin{aligned}
        V_1^{\pi^t} (x_k) = \min_{u_k, v_k} Q_1^{\pi^t}(x_k, u_k, v_k) = Q_1^{\pi^t} (x_k, u_k^t, v_k^t),
    \end{aligned}
\end{equation} where $u_k^t$ and $v_k^t$ are team-optimal solution as defined in \eqref{eqn: leader team-optimal solution} and \eqref{eqn: follower team-optimal solution}.

By equating $Q_1^{\pi^t} (x_k, u_k^t, v_k^t)$ and $V_1^{\pi^t}(x_k)$, we have
\begin{equation}
    \label{eqn: H and P relation}
    P = H + K_1^T H K_1 + K_2^T H K_2 = [I \ K_1^T \ K_2^T] H {[I \ K_1^T \ K_2^T]}^T,
\end{equation}

By substituting \eqref{eqn: H and P relation} into \eqref{eqn: H and P equation}, we can derive the action-value function version of ARE and Bellman equation as follows
\begin{equation}
    \label{eqn: Q version ARE}
    \begin{aligned}
       &H = \textbf{diag}(Q_1, R_{11}, R_{12}) \\
       &+ \gamma \begin{bmatrix}
        A & B_1 & B_2\\
        K_1 A & K_1 B_1 & K_1 B_2\\
        K_2 A & K_2 B_1 & K_2 B_2
        \end{bmatrix}^T H \begin{bmatrix}
        A & B_1 & B_2\\
        K_1 A & K_1 B_1 & K_1 B_2\\
        K_2 A & K_2 B_1 & K_2 B_2
        \end{bmatrix}
    \end{aligned} 
\end{equation}
\begin{equation}
    \label{eqn: Q version Bellman equation}
    Q_1^{\pi^t}(x_k, u_k, v_k) = c_1(x_k, u_k, v_k) + Q_1^{\pi^t}(x_{k+1}, u_{k+1}^t, v_{k+1}^t),
\end{equation} where $u_{k+1}^t = -K_1 x_{k+1}$, and $v_{k+1}^t = -K_2 x_{k+1}$.

Using \eqref{eqn: H}, we can rewrite $K_1$ and $K_2$ as
\begin{equation}
    \label{eqn: rewrite K_1 using H}
    \begin{aligned}
        K_1 &= {(H_{uu} - H_{uv}{(H_{vv})}^{-1}H_{vu})}^{-1}\\
        & \cdot (H_{ux} - H_{uv}{(H_{vv})}^{-1}H_{vx}),
    \end{aligned}
\end{equation}
\begin{equation}
    \label{eqn: rewrite K_2 using H}
    \begin{aligned}
        K_2 &= {(H_{vv} - H_{vu}{(H_{uu})}^{-1}H_{uv})}^{-1}\\
        & \cdot (H_{vx} - H_{vu}{(H_{uu})}^{-1}H_{ux}).
    \end{aligned}
\end{equation}

From \eqref{eqn: rewrite K_1 using H} and \eqref{eqn: rewrite K_2 using H}, we observe that the team-optimal solution only depends on matrix $H$. Similar to $P$, $H$ can be derived by solving the corresponding ARE, which requires the knowledge of system dynamics $(A, B_1, B_2)$. However, if $H$ is known to us, we can derive the team-optimal solution without the knowledge of system dynamics $(A, B_1, B_2)$. Inspired by this observation, we are aiming to develop an approach to solve for $H$ with unknown dynamics.

\subsection{Online derivation of team-optimal solution}

In the traditional Q-learning setting, the agent updates the Q function according to the reward signal and the estimate of optimal future value (based on current Q function). Since each Q function is associated with a certain policy, the update of Q function implies the improvement of policy. This is under the policy iteration framework. Then, we define the updating rule of Q function (or equivalently policy) as
\begin{equation}
    \begin{aligned}
        &Q_1^{\pi_{i+1}} (x_k, u_k, v_k) = [x_{k}^T \ u_{k}^T \ v_{k}^T] H_{i+1} {[x_{k}^T \ u_{k}^T \ v_{k}^T]}^T\\
        &= x_k^T Q_1 x_k + u_k^T R_{11} u_k + v_k^T R_{12} v_k \\
        &\ \ \ \ + \min_{u_{k+1}, v_{k+1}} Q_1^{\pi_{1,i}}(x_{k+1}, u_{k+1}, v_{k+1}) \\
        & = x_k^T Q_1 x_k + u_k^T R_{11} u_k + v_k^T R_{12} v_k \\
        &\ \ \ \ + Q_1^{\pi_{1,i}}(x_{k+1}, u_{k+1,i}^t, v_{k+1,i}^t)\\
        &= x_k^T Q_1 x_k + u_k^T R_{11} u_k + v_k^T R_{12} v_k \\
        &\ \ \ \ + [x_{k+1}^T \ u_{k+1}^T \ v_{k+1}^T] H_i {[x_{k+1}^T \ u_{k+1}^T \ v_{k+1}^T]}^T,
    \end{aligned}
\end{equation} where $i$ indicates the number of policy iteration, $\pi_{i+1}=\{\pi_{1,i+1}, \pi_{2,i+1}\}$, $u_{k+1,i}^t = \pi_{1,i}^t (x_{k+1}) = -K_{1,i} x_{k+1}$, $v_{k+1,i}^t = \pi_{2,i}^t (x_{k+1}) = -K_{2,i} x_{k+1}$, $K_{1,i}$ and $K_{2,i}$ are defined as follows
\begin{equation}
    \label{eqn: rewrite K_{1,i} using H}
    \begin{aligned}
        K_{1,i} &= {(H_{uu}^i - H_{uv}^i{(H_{vv}^i)}^{-1}H_{vu}^i)}^{-1}\\
        & \cdot (H_{ux}^i - H_{uv}^i{(H_{vv}^i)}^{-1}H_{vx}^i),
    \end{aligned}
\end{equation}
\begin{equation}
    \label{eqn: rewrite K_{2,i} using H}
    \begin{aligned}
        K_{2,i} &= {(H_{vv}^i - H_{vu}^i{(H_{uu}^i)}^{-1}H_{uv}^i)}^{-1}\\
        & \cdot (H_{vx}^i - H_{vu}^i{(H_{uu}^i)}^{-1}H_{ux}^i).
    \end{aligned}
\end{equation}

In order to solve the optimal Q-function (equivalently the optimal $H$) forward in time, we derive the following recurrence equation on $i$
\begin{equation}
    \label{eqn: recurrence on i}
    \begin{aligned}
        &Q_1^{\pi_{i+1}} (x_k, u_{k,i}^t, v_{k,i}^t) \\
        &= x_k^T Q_1 x_k + {(u_{k,i}^t)}^T R_{11} u_{k,i}^t + {(v_{k,i}^t)}^T R_{12} v_{k,i}^t \\
        & + [x_{k+1}^T \ {(u_{k+1,i}^t)}^T \ {(v_{k+1,i}^t)}^T] H_i \\
        &\cdot {[x_{k+1}^T \ {(u_{k+1,i}^t)}^T \ {(v_{k+1,i}^t)}^T]}^T,
    \end{aligned}
\end{equation} where $u_{k,i}^t = \pi_{1,i}^t (x_{k}) = -K_{1,i} x_{k}$, and $v_{k,i}^t = \pi_{2,i}^t (x_{k}) = -K_{2,i} x_{k}$.

Our goal is to prove that $Q_1^{\pi_{i}} \rightarrow Q_1^{\pi^t}$ as $i \rightarrow \infty$ which implies $\pi_{i} \rightarrow \pi^t$, $H_i \rightarrow H$, $K_{1,i} \rightarrow K_1$, and $K_{2,i} \rightarrow K_2$ as $i \rightarrow \infty$.

Then, in order to directly estimate the Q function, we rewrite the Q function in a parametric structure (parameterized by $H$) as
\begin{equation}
    \label{eqn: parameterized by H}
    Q_1^{\pi_i}(x_k, u_k, v_k) = z_k^T H_i z_k = {\bar{z}_k}^T \Theta(H_i),
\end{equation} where $z_k = {[x_k^T \ u_k^T \ v_k^T]}^T \in \mathbb{R}^l$, $\bar{z}_k \in \mathbb{R}^{l(l+1)/2}$ is the vector whose elements are all of the quadratic basis functions over the elements of $z_k$ (Kronecker product quadratic polynomial basis vector \cite{brewer1978kronecker}), i.e., $\bar{z}_k = (z_{k,1}^2, z_{k,1}z_{k,2}, \ldots, z_{k,1}z_{k,l}, z_{k,2}^2, z_{k,2}z_{k,3}, \ldots, z_{k,2}z_{k,l}, \ldots,$ 
$ z_{k,l-1}^2, z_{k,l-1}z_{k,l}, z_{k,l}^2)$. $\Theta(H_i) \in \mathbb{R}^{l(l+1)/2}$ is the vector whose elements are the $l$ diagonal entries of $H_i$ and the $(l(l+1)/2 - l)$ distinct sums of off-diagonal elements, $H_i[j,k] + H_i[k,j]$. $H_i[j,k]$ indicates the element of $H_i$ located at $j$-th row and $k$-th column. The original matrix $H_i$ can be retrieved from $\Theta(H_i)$ since $H_i$ is symmetric.

According to \eqref{eqn: parameterized by H}, $Q_1^{\pi_{i+1}} (x_k, u_{k,i}^t, v_{k,i}^t)$ is linearly parameterized by vector $\Theta(H_{i+1})$. Given that $H_i$ is known to us, we can view \eqref{eqn: recurrence on i} as the desired target function of the estimate of $Q_1^{\pi_{i+1}} (x_k, u_{k, i}^t, v_{k, i}^t)$, i.e., $\hat{Q}_1^{\pi_{i+1}} (x_k, u_{k, i}^t, v_{k, i}^t):= \bar{z}_k^T \hat{\Theta}(H_{i+1})$. Note that what we retrieve from the vector $\hat{\Theta}(H_{i+1})$ is the estimate of $H_{i+1}$, i.e., $\hat{H}_{i+1}$. 

Specifically, we consider the least-square approximation, i.e., find the parameter vector to minimize the error between the target value and estimate in a least-square sense over a compact set 
$X_c \subset X$,
\begin{equation}
    \label{eqn: least-square problem}
    \begin{aligned}
        &\hat{\Theta}(H_{i+1}) = \hat{h}_{i+1}\\
        &:= {\mathop{\arg\min}_{h}}\big(\int_{X_c}{\big|\bar{z}_k^T h - Q_1^{\pi_{i+1}} (x_k, u_{k,i}^t, v_{k,i}^t)\big|}^2\text{d}x_k\big).
    \end{aligned}
\end{equation}

Solving the least-square problem \eqref{eqn: least-square problem} gives us
\begin{equation}
    \hat{h}_{i+1} = {\bigg(\int_{X_c} \bar{z}_k {\bar{z}_k}^T \bigg)}^{-1} \int_{X_c} \bar{z}_k Q_1^{\pi_{i+1}} (x_k, u_{k,i}^t, v_{k,i}^t) \text{d}x.
\end{equation}

Note that $\bar{z}_k$ is the function of $x_k$, i.e., $\bar{z}_k(x_k)$ since $z_k = {[x_k^T \ {(u_{k,i}^t)}^T \ {(v_{k,i}^t)}^T]}^T$ where both $u_{k,i}^t$ and $v_{k,i}^t$ are linearly dependent on $x_k$. Thus, $\int_{X_c} \bar{z}_k {\bar{z}_k}^T \text{d}x$ is convertible, which implies that the least-square problem \eqref{eqn: least-square problem} is not well-defined. We introduce the exploration noise to both controller inputs to solve this issue, i.e.,
\begin{equation}
    \label{eqn: u+exploration}
    \hat{u}_{k,i}^t = u_{k,i}^t + \epsilon_{1,k} = - K_{1,i}x_k + \epsilon_{1,k},
\end{equation}
\begin{equation}
    \label{eqn: v+exploration}
    \hat{v}_{k,i}^t = v_{k,i}^t + \epsilon_{2,k} = - K_{2,i}x_k + \epsilon_{2,k},
\end{equation} where $\epsilon_{1,k} \sim N(0,\sigma_1)$ and $\epsilon_{2,k} \sim N(0,\sigma_2)$.

Then, the desired target defined by \eqref{eqn: recurrence on i} becomes
\begin{equation}
    \label{eqn: desired target with noise}
    \begin{aligned}
        &\hat{Q}_1^{\pi_{i+1}} (x_k, u_{k,i}^t, v_{k,i}^t) \\
        &= x_k^T Q_1 x_k + {(\hat{u}_{k,i}^t)}^T R_{11} \hat{u}_{k,i}^t + {(\hat{v}_{k,i}^t)}^T R_{12} \hat{v}_{k,i}^t \\
        & + [x_{k+1}^T \ {(u_{k+1,i}^t)}^T \ {(v_{k+1,i}^t)}^T] H_i \\
        &\cdot {[x_{k+1}^T \ {(u_{k+1,i}^t)}^T \ {(v_{k+1,i}^t)}^T]}^T \\
        &=\hat{Q}_1^{\pi_{i+1}} (x_k, H_i).
    \end{aligned}
\end{equation}

Given a sufficiently large set $X_c$, i.e., enough data points ($d_1, d_2, d_3, \ldots, d_N \in X_c$) collected, for solving the least-square problem \eqref{eqn: least-square problem}, we have
\begin{equation}
    \label{eqn: solution of least-square problem}
    \hat{h}_{i+1} = {\bigg(\hat{Z} {\big(\hat{Z}\big)}^T\bigg)}^{-1} \hat{Z} \hat{Q},
\end{equation} where $\hat{Z} = [\hat{z}(d_1), \hat{z}(d_2), \ldots, \hat{z}(d_N)]$, $\hat{z}(d_j) = {[d_j^T \ {(-K_{1,i} d_j + \epsilon_{1,k})}^T \ {(-K_{2,i} d_j + \epsilon_{2,k})}^T]}^T$, and $\hat{Q}={[\hat{Q}_1^{\pi_{i+1}} (d_1, H_i), \hat{Q}_1^{\pi_{i+1}} (d_2, H_i), \ldots, \hat{Q}_1^{\pi_{i+1}} (d_N, H_i)]}^T$.

The least-square problem \eqref{eqn: least-square problem} can be solved in an online fashion (i.e., without requiring any knowledge of system dynamics $(A, B_1, B_2)$), and under a policy iteration framework. It should be noted that, before implementing the policy iteration, we need to collect enough data tuples ${\{x_k, x_{k+1}\}}_{k=1,2,\ldots, N-1}$. In addition, since $H_i \in \mathbb{R}^{l \times l}$ is symmetric with $l(l+1)/2$ independent elements, at least $l(l+1)/2$ data tuples are required (i.e., $N \geq l(l+1)/2 + 1$) when solving \eqref{eqn: least-square problem}.

Given the set of data tuples and the knowledge of the cost function ($Q_1$, $R_{11}$, and $R_{12}$) and $H_i$, we can readily derive corresponding $\hat{Q}_1^{\pi_{i+1}} (x_k, u_{k, i}^t, v_{k, i}^t)$ and $\bar{z}_k$. 

Then, we propose an algorithm for online implementation.

\begin{algorithm}
\caption{Online Derivation of Team-optimal Solution using Q-learning-based ADP}
\label{alg: online derivation of team-optimal solution}
\begin{algorithmic}
\Require $Q_1$, $R_{11}$, $R_{12}$ (coefficient matrices of cost function), $H_0$, $x_0$, $\epsilon$;
\Ensure optimal $h = \Theta(H)$;
\State \textbf{Initialization}: $i=0$, $H_0 = 0$, $h_0 = \Theta(H_0)=0$, $P_0 = 0$, $K_{1,0} = 0$, $K_{2,0} = 0$;
\State \textbf{Step 1}: Online Data Collection
\State Collect enough data tuples ${\{x_k, x_{k+1}\}}_{k=1,2,\ldots, N-1}$, $N \geq l(l+1)/2 + 1$;
\State \textbf{Step 2}: Policy Evaluation 
\State Solve the least-square problem for $\hat{h}_{i+1}$ according to \eqref{eqn: solution of least-square problem}, and retrieve the estimate $\hat{H}_{i+1}$;
\State \textbf{Step 3}: Policy Improvement
\State Derive the new improved policy $K_{1,i+1}$ and $K_{2,i+1}$ based on \eqref{eqn: rewrite K_{1,i} using H} and \eqref{eqn: rewrite K_{2,i} using H};
\If{$\| \hat{h}_{i+1} - \hat{h}_{i} \| > \epsilon$}
    \State $i \leftarrow i + 1$, \text{go back to} \textbf{Step 2};
\ElsIf{$\| \hat{h}_{i+1} - \hat{h}_{i} \| \leq \epsilon$}
    Finish
\EndIf
\end{algorithmic}
\end{algorithm}

\subsection{Convergence to the team-optimal solution}

In this section, we will prove the effectiveness of the Algorithm \ref{alg: online derivation of team-optimal solution}, i.e., the output will converge to the optimal solution given enough samples and policy iteration numbers (sufficiently large $N$ and $i$). The convergence of the least-square problem given enough data points can be readily proved \footnote{Consider the limited space and the main focus of this work, we skip the detailed proofs. Readers may refer to \cite{nashed1974convergence, fogel1981fundamental} for details}, i.e., $\hat{h}_i \rightarrow h_i$ as $N \rightarrow \infty$. Our main focus is to prove that $h_i \rightarrow h$, $H_i \rightarrow H$, $P_i \rightarrow P$ and $Q_{1,i} \rightarrow Q_1^{\pi^t}$ as $i \rightarrow \infty$.

\begin{lemma}
    \label{lemma: convergence 1}
    The update of $h_i \rightarrow h_{i+1}$ following Algorithm \ref{alg: online derivation of team-optimal solution} is equivalent to the update of $H_i \rightarrow H_{i+1}$ defined as 
    \begin{equation}
        \label{eqn: convergence 1 lemma result}
        \begin{aligned}
           H_{i+1} &= \textbf{diag}(Q_1, R_{11}, R_{12}) + \gamma \begin{bmatrix}
            A & B_1 & B_2\\
            K_{1,i} A & K_{1,i} B_1 & K_{1,i} B_2\\
            K_{2,i} A & K_{2,i} B_1 & K_{2,i} B_2
           \end{bmatrix}^T \\
           &\cdot H_i \begin{bmatrix}
           A & B_1 & B_2\\
           K_{1,i} A & K_{1,i} B_1 & K_{1,i} B_2\\
           K_{2,i} A & K_{2,i} B_1 & K_{2,i} B_2
        \end{bmatrix}.
        \end{aligned}
    \end{equation}
\end{lemma}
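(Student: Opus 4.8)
The plan is to recognize that the target against which Step~2 of Algorithm~\ref{alg: online derivation of team-optimal solution} regresses, namely $\hat Q_1^{\pi_{i+1}}(x_k,H_i)$ from \eqref{eqn: desired target with noise}, is \emph{itself} a quadratic form in $z_k=[x_k^T\ (\hat u_{k,i}^t)^T\ (\hat v_{k,i}^t)^T]^T$; once that is established, the least-squares fit over the Kronecker quadratic basis must return exactly the coefficient vector of that form, so it only remains to read the form off and check that its matrix is the right-hand side of \eqref{eqn: convergence 1 lemma result}.

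First I would make the quadratic structure of the target explicit. The one-step term $x_k^TQ_1x_k+(\hat u_{k,i}^t)^TR_{11}\hat u_{k,i}^t+(\hat v_{k,i}^t)^TR_{12}\hat v_{k,i}^t$ is $z_k^T\,\textbf{diag}(Q_1,R_{11},R_{12})\,z_k$ by inspection. For the continuation term, the dynamics \eqref{eqn: dynamics} evaluated at the \emph{actually applied} inputs give $x_{k+1}=[A\ B_1\ B_2]\,z_k$ regardless of the exploration noise (since $z_k$ already records those inputs), while the noise-free greedy continuation used inside $H_i$ obeys $[x_{k+1}^T\ (u_{k+1,i}^t)^T\ (v_{k+1,i}^t)^T]^T=\Lambda_i x_{k+1}$ with $\Lambda_i$ the block column stacking $I$, $-K_{1,i}$, $-K_{2,i}$. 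Hence, writing $\Gamma_i:=\Lambda_i[A\ B_1\ B_2]$, the continuation term is $\gamma\,z_k^T\Gamma_i^T H_i\Gamma_i z_k$ and
\begin{equation}
\hat Q_1^{\pi_{i+1}}(x_k,H_i)=z_k^T H_{i+1}z_k,\qquad H_{i+1}=\textbf{diag}(Q_1,R_{11},R_{12})+\gamma\,\Gamma_i^T H_i\Gamma_i .
\end{equation}
Expanding $\Gamma_i$ block by block, using $K_{j,i}x_{k+1}=K_{j,i}Ax_k+K_{j,i}B_1u_k+K_{j,i}B_2v_k$, recovers the stacked matrix in \eqref{eqn: convergence 1 lemma result}, so this $H_{i+1}$ is the one claimed there.

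It then remains to argue that the regression actually returns $\Theta(H_{i+1})$. Since $z_k^T H_{i+1}z_k\equiv\bar z_k^T\Theta(H_{i+1})$ identically in $z_k$, the residual in the least-squares problem \eqref{eqn: least-square problem} (equivalently \eqref{eqn: solution of least-square problem}) vanishes at $h=\Theta(H_{i+1})$; and because the data are collected with the exploration noise \eqref{eqn: u+exploration}--\eqref{eqn: v+exploration}, the regressor matrix $\hat Z\hat Z^T$ is nonsingular, so the minimizer is unique and equals $\Theta(H_{i+1})$. As $\Theta(\cdot)$ is a linear bijection between symmetric $l\times l$ matrices and $\mathbb{R}^{l(l+1)/2}$, the update $h_i\to h_{i+1}$ generated by Algorithm~\ref{alg: online derivation of team-optimal solution} coincides with the update $H_i\to H_{i+1}$ in \eqref{eqn: convergence 1 lemma result}, which is the claim. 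I expect the main obstacle to lie precisely here, not in the algebra: one must track carefully which inputs carry exploration noise (those forming $z_k$ and propagating $x_{k+1}$) versus the noise-free greedy continuation (those appearing inside $H_i$) so that $x_{k+1}=[A\ B_1\ B_2]z_k$ genuinely holds, and one must justify that injecting the noise makes $\int_{X_c}\bar z_k\bar z_k^T$ (equivalently $\hat Z\hat Z^T$ for $N\ge l(l+1)/2+1$ sufficiently rich samples) invertible, since without it $z_k=\Lambda_i x_k$ confines the regressors to a lower-dimensional set and the Gram matrix degenerates.
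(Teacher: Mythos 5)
Your proposal is correct and follows essentially the same route as the paper: you show the regression target is exactly the quadratic form $z_k^T\tilde H_i z_k$ with $\tilde H_i=\textbf{diag}(Q_1,R_{11},R_{12})+\gamma\Gamma_i^TH_i\Gamma_i$, so the least-squares projection returns $\Theta(\tilde H_i)$ exactly and hence $H_{i+1}=\tilde H_i$. You are in fact somewhat more careful than the paper's own proof, which works with the noise-free regressors $z_k=[I\ \,{-K_{1,i}^T}\ \,{-K_{2,i}^T}]^Tx_k$ and writes $(ZZ^T)^{-1}ZZ^T=I$ without comment; your explicit bookkeeping of which inputs carry exploration noise (so that $x_{k+1}=[A\ B_1\ B_2]z_k$ still holds and $\hat Z\hat Z^T$ is nonsingular) closes that gap.
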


\begin{proof}
    We first rewrite \eqref{eqn: recurrence on i} as
    \begin{equation}
        Q_1^{\pi_{i+1}} (z_k, \tilde{H}_i) = z_k^T \tilde{H}_i z_k, 
    \end{equation} where $z_k = {[x_k^T \ {(u_{k_i}^t)}^T \ {(v_{k_i}^t)}^T]}^T$, $u_{k_i}^t = - K_{1,i} x_k$, $v_{k_i}^t = - K_{2,i} x_k$, and
    \begin{equation}
        \begin{aligned}
            \tilde{H}_i &= \textbf{diag}(Q_1, R_{11}, R_{12}) + \gamma \begin{bmatrix}
            A & B_1 & B_2\\
            K_{1,i} A & K_{1,i} B_1 & K_{1,i} B_2\\
            K_{2,i} A & K_{2,i} B_1 & K_{2,i} B_2
            \end{bmatrix}^T \\
            &H \begin{bmatrix}
            A & B_1 & B_2\\
            K_{1,i} A & K_{1,i} B_1 & K_{1,i} B_2\\
            K_{2,i} A & K_{2,i} B_1 & K_{2,i} B_2
            \end{bmatrix}.
        \end{aligned}
    \end{equation}
    Furthermore, we substitute $Q_1^{\pi_i}(x_k, u_k, v_k) = z_k^T \tilde{H}_i z_k = {\bar{z}_k}^T \Theta(\tilde{H}_i)$ into the least-square solution as defined in \eqref{eqn: solution of least-square problem}, we have
    \begin{equation}
        h_{i+1} = {(ZZ^T)}^{-1}ZZ^T \Theta(\tilde{H}_i) = \Theta(\tilde{H}_i).
    \end{equation}

    Since $h_{i+1} = \Theta(H_{i+1})$, we have $H_{i+1} = \tilde{H}_i)$ which leads to \eqref{eqn: convergence 1 lemma result}.
\end{proof}

\begin{lemma}
    \label{lemma: convergence 2}
    The matrices $H_{i+1}$, $K_{1,i+1}$ and $K_{2,i+1}$ can be rewritten as functions of $P_i = [I \ K_{1,i}^T \ K_{2,i}^T] H_i {[I \ K_{1,i}^T \ K_{2,i}^T]}^T$ as
    \begin{equation}
        \label{eqn: lemma convergence 2 eq1}
        \begin{aligned}
        & H_{i+1}\\
        & = \begin{bmatrix}
        Q_1 + \gamma A^T P_i A & \gamma A^T P_i B_1 & \gamma A^T P_i B_2\\
        \gamma B_1^T P_i A & R_{11} + \gamma B_1^T P_i B_1 & \gamma B_1^T P_i B_2\\
        \gamma B_2^T P_i A & \gamma B_2^T P_i B_1 & R_{12} + \gamma B_2^T P_i B_2
        \end{bmatrix}
    \end{aligned}
    \end{equation}
    \begin{equation}
        \label{eqn: K_{i+1,j} iteration}
        K_{j,i+1} = \gamma{(R_{1j} + \gamma F_j B_j)}^{-1} F_j A
    \end{equation}
    \begin{equation}
        \label{eqn: F_{j} iteration}
        \begin{aligned}
             F_{j} = B_j^T P_i \big[I - \gamma B_k{[R_{1k} + \gamma B_k^T P_i B_k]}^{-1} B_k^T P_i\big],& \\
             \ j,k=1,2, \ j \neq k&
        \end{aligned}
    \end{equation}
\end{lemma}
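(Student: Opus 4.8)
The plan is to start from the result of Lemma \ref{lemma: convergence 1}, equation \eqref{eqn: convergence 1 lemma result}, which expresses $H_{i+1}$ as a congruence-type transformation of $H_i$ built from the stacked matrix whose rows are $[A\ B_1\ B_2]$, $K_{1,i}[A\ B_1\ B_2]$, and $K_{2,i}[A\ B_1\ B_2]$. I would observe that this stacked matrix factors as $[I\ K_{1,i}^T\ K_{2,i}^T]^T [A\ B_1\ B_2]$, so that the quadratic form in \eqref{eqn: convergence 1 lemma result} collapses to $\gamma [A\ B_1\ B_2]^T \big([I\ K_{1,i}^T\ K_{2,i}^T] H_i [I\ K_{1,i}^T\ K_{2,i}^T]^T\big) [A\ B_1\ B_2]$. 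Recognizing the middle factor as exactly the definition of $P_i$ given in the lemma statement, I would substitute to get $H_{i+1} = \textbf{diag}(Q_1, R_{11}, R_{12}) + \gamma [A\ B_1\ B_2]^T P_i [A\ B_1\ B_2]$, and then expand the outer product block by block to read off \eqref{eqn: lemma convergence 2 eq1}. This mirrors the derivation of \eqref{eqn: H} from \eqref{eqn: H and P equation} and \eqref{eqn: H and P relation}, with $P$ replaced by $P_i$; it is essentially bookkeeping.

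For the second part, I would note that $K_{1,i+1}$ and $K_{2,i+1}$ are defined by \eqref{eqn: rewrite K_{1,i} using H} and \eqref{eqn: rewrite K_{2,i} using H} applied to the blocks of $H_{i+1}$. Plugging in the block expressions from \eqref{eqn: lemma convergence 2 eq1}, namely $H_{uu}^{i+1} = R_{11} + \gamma B_1^T P_i B_1$, $H_{uv}^{i+1} = \gamma B_1^T P_i B_2$, $H_{ux}^{i+1} = \gamma B_1^T P_i A$, and symmetrically for the $v$-blocks, reduces $K_{1,i+1}$ and $K_{2,i+1}$ to expressions in $A$, $B_1$, $B_2$, $P_i$, $R_{11}$, $R_{12}$ alone. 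I would then perform the same algebraic manipulation that takes \eqref{eqn: K_1} and \eqref{eqn: K_2} to \eqref{eqn: K_i}–\eqref{eqn: F_i} in the proof of Lemma \ref{lemma: team-optimal}: substitute the Schur-complement form, factor out the common $\gamma F_j A$ on the right, and identify $F_j = B_j^T P_i[I - \gamma B_k(R_{1k}+\gamma B_k^T P_i B_k)^{-1} B_k^T P_i]$. Since the static-game algebra in Lemma \ref{lemma: team-optimal} is identical with $P \mapsto P_i$, this step goes through verbatim and yields \eqref{eqn: K_{i+1,j} iteration} and \eqref{eqn: F_{j} iteration}.

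The only genuine subtlety — and the step I would be most careful about — is the consistency of the two block-elimination formulas \eqref{eqn: rewrite K_{1,i} using H} and \eqref{eqn: rewrite K_{2,i} using H}: they define $K_{1,i+1}$ using $H_{vv}^{i+1}$ and $K_{2,i+1}$ using $H_{uu}^{i+1}$, so one has to check that the coupled system obtained by substituting one into the other is precisely the fixed-point pair that the static Stackelberg (team) optimization in Lemma \ref{lemma: team-optimal} produces when $P$ is frozen at $P_i$. This is where the cross-substitution $(R_{11}+\gamma F_1 B_1)K_{1,i+1} = \gamma F_1 A$ has to reappear; I expect it to, because the block structure of $H_{i+1}$ in \eqref{eqn: lemma convergence 2 eq1} is structurally identical to that of $H$ in \eqref{eqn: H}. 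I do not anticipate needing any invertibility assumption beyond those already implied by $R_{11}, R_{12} > 0$ and $P_i \geq 0$, which guarantee the relevant Schur complements are positive definite. Everything else is routine matrix algebra, so no environment or inequality beyond what is already in the paper is required.
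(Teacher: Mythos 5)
Your proposal is correct and follows essentially the same route as the paper: factor the stacked matrix in \eqref{eqn: convergence 1 lemma result} as ${[I\ K_{1,i}^T\ K_{2,i}^T]}^T[A\ B_1\ B_2]$, identify the middle factor with $P_i$ via \eqref{eqn: H and P relation} to obtain \eqref{eqn: lemma convergence 2 eq1}, and then feed the resulting blocks into \eqref{eqn: rewrite K_{1,i} using H}--\eqref{eqn: rewrite K_{2,i} using H} and repeat the cross-substitution algebra of Lemma \ref{lemma: team-optimal} with $P$ replaced by $P_i$. Your treatment of the coupled Schur-complement step is in fact more explicit than the paper's, which states this final reduction without detail.
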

\begin{proof}
    We can rewrite \eqref{eqn: convergence 1 lemma result} in Lemma \ref{lemma: convergence 1} as
    \begin{equation}
        \label{eqn: convergence proof of lemma 2 H_i+1}
        \begin{aligned}
            H_{i+1} =& \textbf{diag}(Q_1, R_{11}, R_{12}) + {[A \ B \ E]}^T [I \ K_{1,i}^T \ K_{2,i}^T] H_i \\
            &\cdot {[I \ K_{1,i}^T \ K_{2,i}^T]}^T [A \ B \ E]
        \end{aligned}
    \end{equation}

    The relation between $P_i$ and $H_i$ is according to \eqref{eqn: H and P relation}. Substituting \eqref{eqn: H and P relation} into \eqref{eqn: convergence proof of lemma 2 H_i+1} leads to \eqref{eqn: lemma convergence 2 eq1}. Based on \eqref{eqn: lemma convergence 2 eq1}, \eqref{eqn: rewrite K_{1,i} using H} and \eqref{eqn: rewrite K_{2,i} using H}, we derive \eqref{eqn: K_{i+1,j} iteration} and \eqref{eqn: F_{j} iteration}.
\end{proof}

\begin{lemma}
    \label{lemma: convergence 3}
    The update of $H_i \rightarrow H_{i+1}$ as \eqref{eqn: convergence 1 lemma result} is equivalent to the update of $P_i \rightarrow P_{i+1}$ as
    \begin{equation}
        \label{eqn: lemma convergence 3 P iteration}
        \begin{aligned}
            &P_{i+1} = \gamma A^T P_i A - [A^T P_i B_1 \ A^T P_i B_2] \\
            &\cdot \begin{bmatrix}
            R_{11} + \gamma B_1^T P_i B_1 & \gamma B_1^T P_i B_2 \\
            \gamma B_2^T P_i B_1 & \gamma[R_{12} + \gamma B_2^T P_i B_2] 
            \end{bmatrix}\\
            & \cdot {[A^T P_i B_1 \ A^T P_i B_2]}^T,
        \end{aligned}
    \end{equation} where $P_i = [I \ K_{1,i}^T \ K_{2,i}^T] H_i {[I \ K_{1,i}^T \ K_{2,i}^T]}^T$.
\end{lemma}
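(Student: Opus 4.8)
The plan is to pass through the companion identity
$P_{i+1}=[I\ K_{1,i+1}^T\ K_{2,i+1}^T]\,H_{i+1}\,[I\ K_{1,i+1}^T\ K_{2,i+1}^T]^T$
(the iteration-$(i+1)$ instance of \eqref{eqn: H and P relation}) and to eliminate the gains $K_{1,i+1},K_{2,i+1}$ by a Schur-complement computation. First I would recall, from the derivation of \eqref{eqn: rewrite K_1 using H}--\eqref{eqn: rewrite K_2 using H}, that $-K_{1,i+1}x$ and $-K_{2,i+1}x$ are exactly the stationary point of the quadratic form $z^TH_{i+1}z$ with $z=[x^T\ u^T\ v^T]^T$ when it is minimized \emph{jointly} over $(u,v)$: setting the gradient in $(u,v)$ to zero gives $H^{i+1}_{ux}x+H^{i+1}_{uu}u+H^{i+1}_{uv}v=0$ and $H^{i+1}_{vx}x+H^{i+1}_{vu}u+H^{i+1}_{vv}v=0$, and eliminating $v$ (resp.\ $u$) reproduces \eqref{eqn: rewrite K_{1,i} using H} (resp.\ \eqref{eqn: rewrite K_{2,i} using H}) with $i$ replaced by $i+1$. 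The required inverses exist because, by Lemma \ref{lemma: convergence 2} and \eqref{eqn: lemma convergence 2 eq1}, the lower-right $(u,v)$-block of $H_{i+1}$ is $\mathrm{diag}(R_{11},R_{12})+\gamma[B_1\ B_2]^TP_i[B_1\ B_2]>0$ since $R_{11},R_{12}>0$ and $P_i\geq 0$; this also makes $z^TH_{i+1}z$ strictly convex in $(u,v)$, so the stationary pair is the unique minimizer.

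Given that identification, $P_{i+1}$ equals the minimized value of the quadratic, i.e.\ the Schur complement of the $(u,v)$-block in $H_{i+1}$,
\begin{equation*}
    P_{i+1}=H^{i+1}_{xx}-\bigl[H^{i+1}_{xu}\ \ H^{i+1}_{xv}\bigr]\begin{bmatrix}H^{i+1}_{uu}&H^{i+1}_{uv}\\ H^{i+1}_{vu}&H^{i+1}_{vv}\end{bmatrix}^{-1}\bigl[H^{i+1}_{xu}\ \ H^{i+1}_{xv}\bigr]^{T}.
\end{equation*}
Substituting the block entries of $H_{i+1}$ as functions of $P_i$ from \eqref{eqn: lemma convergence 2 eq1} then gives \eqref{eqn: lemma convergence 3 P iteration}. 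For the reverse implication---that the recursion \eqref{eqn: lemma convergence 3 P iteration} carries the same information as \eqref{eqn: convergence 1 lemma result}---I would invoke Lemma \ref{lemma: convergence 2} directly: it already exhibits $H_{i+1}$, $K_{1,i+1}$ and $K_{2,i+1}$ as explicit functions of $P_i$, so the map $H_i\mapsto H_{i+1}$ factors through $P_i=[I\ K_{1,i}^T\ K_{2,i}^T]H_i[I\ K_{1,i}^T\ K_{2,i}^T]^T$, and conversely $P_{i+1}$ is recovered from $H_{i+1}$ and these gains; hence iterating the $H$-update and iterating the $P$-update are equivalent.

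The block-matrix bookkeeping in the substitution step is routine. The one place that genuinely needs care is the first step: certifying that $(-K_{1,i+1}x,-K_{2,i+1}x)$ is the \emph{joint} minimizer of $z^TH_{i+1}z$ rather than merely a person-by-person (Nash-type) stationary pair, since the entire Schur-complement collapse hinges on it. As noted, this reduces to the positive definiteness of the $(u,v)$-block of $H_{i+1}$, which follows from $R_{11},R_{12}>0$ together with $P_i\geq 0$ (itself an easy induction from $P_0=0$).
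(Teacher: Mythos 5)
Your proposal is correct and shares the same skeleton as the paper's proof: both pass through the companion identity $P_{i+1}=[I\ K_{1,i+1}^T\ K_{2,i+1}^T]\,H_{i+1}\,[I\ K_{1,i+1}^T\ K_{2,i+1}^T]^T$ and then express $H_{i+1}$ in terms of $P_i$ via Lemma~\ref{lemma: convergence 2}. Where you differ is in the final collapse. The paper expands the companion identity into the form \eqref{eqn: lemma convergence 3 P_{i+1}}, i.e.\ $P_{i+1}=Q_1+K_{1,i+1}^TR_{11}K_{1,i+1}+K_{2,i+1}^TR_{12}K_{2,i+1}+\gamma(A+B_1K_{1,i+1}+B_2K_{2,i+1})^TP_i(A+B_1K_{1,i+1}+B_2K_{2,i+1})$, and then substitutes the explicit gain formulas \eqref{eqn: K_{i+1,j} iteration}--\eqref{eqn: F_{j} iteration}; you instead recognize $(-K_{1,i+1}x,-K_{2,i+1}x)$ as the joint minimizer of $z^TH_{i+1}z$ and read off $P_{i+1}$ as the Schur complement of the $(u,v)$-block of $H_{i+1}$. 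Your route buys two things the paper leaves implicit: a certification that the gains defined by the person-by-person eliminations \eqref{eqn: rewrite K_{1,i} using H}--\eqref{eqn: rewrite K_{2,i} using H} are the \emph{joint} minimizer and not merely a stationary pair (via positive definiteness of the $(u,v)$-block, $\mathrm{diag}(R_{11},R_{12})+\gamma[B_1\ B_2]^TP_i[B_1\ B_2]>0$, with $P_i\geq 0$ by induction from $P_0=0$), and with it the invertibility of every matrix appearing in the gain formulas. One discrepancy worth flagging: the Schur complement you obtain is $Q_1+\gamma A^TP_iA-\gamma^2[A^TP_iB_1\ A^TP_iB_2]\,[\,\cdot\,]^{-1}\,[A^TP_iB_1\ A^TP_iB_2]^T$, which does not literally match the printed display \eqref{eqn: lemma convergence 3 P iteration} (that display omits $Q_1$ and the inverse on the middle block); the printed equation appears to contain typographical errors, and your form is the standard GARE value-iteration map consistent with \eqref{eqn: lemma convergence 3 P_{i+1}}, so this is not a defect of your argument.
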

\begin{proof}
    Since $P_{i+1} = [I \ K_{1,i+1}^T \ K_{2,i+1}^T] H_{i+1} {[I \ K_{1,i+1}^T \ K_{2,i+1}^T]}^T$, we substitute $H_{i+1}$ using \eqref{eqn: lemma convergence 2 eq1} in Lemma \ref{lemma: convergence 2}, and have
    \begin{equation}
        \label{eqn: lemma convergence 3 P_{i+1}}
        \begin{aligned}
            P_{i+1} =& Q_1 + K_{1,i+1}^T R_{11} K_{1,i+1} + K_{2,i+1}^T R_{12} K_{2,i+1}\\
            &+ \gamma (A^T + K_{1,i+1}^T B_1^T + K_{2,i+1}^T B_2^T) P_i\\
            &\cdot(A + B_1 K_{1,i+1} + B_2 K_{2,i+1})
        \end{aligned} 
    \end{equation}

    Then, we substitute \eqref{eqn: K_{i+1,j} iteration}, \eqref{eqn: F_{j} iteration} into \eqref{eqn: lemma convergence 3 P_{i+1}}, which leads to \eqref{eqn: lemma convergence 3 P iteration}.
\end{proof}

Now, we are ready to state the main theorem for the convergence of policy iteration to the optimal solution.

\begin{theorem}
    \label{theorem: convergence of policy iteration for team-optimal solution}
    Consider the joint optimization problem captured by \eqref{eqn: dynamics} and \eqref{eqn: cost function of leader}.
    Given enough samples, the policy iteration process in Algorithm \ref{alg: online derivation of team-optimal solution} is equivalent to the iterating process of $H_i$ as in \eqref{eqn: convergence 1 lemma result}, and will converge to the optimal solution, i.e., $H_i \rightarrow H$, where $H$ corresponds to the optimal action-value function $Q_1^{\pi^t}$ (as in \eqref{eqn: optimal H}), and $P_i \rightarrow P$ where $P$ corresponds to the state-value function $V_1^{\pi^t}$ and solve the generalized algebraic Riccati equation (GARE).
    \begin{equation}
        \label{eqn: GARE}
        \begin{aligned}
            P =& \gamma A^T P A - [A^T P B_1 \ A^T P B_2] \\
            &\cdot \begin{bmatrix}
            R_{11} + \gamma B_1^T P B_1 & \gamma B_1^T P B_2 \\
            \gamma B_2^T P B_1 & \gamma[R_{12} + \gamma B_2^T P B_2] 
            \end{bmatrix}\\
            & \cdot {[A^T P B_1 \ A^T P B_2]}^T,
        \end{aligned}
    \end{equation}    
\end{theorem}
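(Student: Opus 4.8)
The plan is to split the claim into three stages---reduce the algorithm to matrix recursions, prove convergence of the $P$-recursion, then propagate it back. \textbf{Stage 1 (algorithm-to-recursion reduction).} By Lemma~\ref{lemma: convergence 1}, once enough persistently exciting samples are collected---the exploration noise $\epsilon_{1,k},\epsilon_{2,k}$ of \eqref{eqn: u+exploration}--\eqref{eqn: v+exploration} makes $\hat Z\hat Z^T$ nonsingular, so that $\hat h_{i+1}\to h_{i+1}$ as $N\to\infty$---the policy-evaluation/improvement loop of Algorithm~\ref{alg: online derivation of team-optimal solution} realizes exactly the map $H_i\mapsto H_{i+1}$ in \eqref{eqn: convergence 1 lemma result}. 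Lemmas~\ref{lemma: convergence 2}--\ref{lemma: convergence 3} then show this map is equivalent to the recursion $P_i\mapsto P_{i+1}$ in \eqref{eqn: lemma convergence 3 P iteration} under the correspondence $P_i=[I\ K_{1,i}^T\ K_{2,i}^T]H_i[I\ K_{1,i}^T\ K_{2,i}^T]^T$. Hence the theorem reduces to proving $P_i\to P$, the relevant positive semidefinite solution of the GARE \eqref{eqn: GARE}.

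\textbf{Stage 2 (convergence of the $P$-recursion).} The key observation is that \eqref{eqn: lemma convergence 3 P iteration} is the Riccati value-iteration recursion $P_{i+1}=\mathcal{R}(P_i)$ for the single-player discounted LQR obtained by stacking the inputs $w_k=[u_k^T\ v_k^T]^T$ with positive-definite input weight $\textbf{diag}(R_{11},R_{12})$ and state weight $Q_1\succeq 0$; that is, $\mathcal{R}$ is the Bellman optimality operator of problem \eqref{eqn: joint optimization} restricted to quadratic forms, and its minimal positive semidefinite fixed point is the $P$ of \eqref{eqn: GARE}, which by Lemma~\ref{lemma: team-optimal} is the team-optimal cost matrix. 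Since the initialization $H_0=0$ gives $P_0=0$, I would argue by induction that $\{P_i\}$ is (a) monotonically nondecreasing in the Loewner order, from monotonicity of $\mathcal{R}$ together with $P_1\succeq 0=P_0$, and (b) uniformly bounded above, since $x_0^TP_i x_0$ equals the optimal cost of the $i$-stage truncation of \eqref{eqn: joint optimization} and is therefore at most the infinite-horizon optimal cost $x_0^T P x_0$, which is finite under the standing stabilizability assumption (some linear feedback makes $\sqrt{\gamma}(A-B_1K_1-B_2K_2)$ Schur). Monotone convergence yields $P_i\to P_\infty\succeq 0$; since $R_{11},R_{12}\succ 0$ keep the block inverted in \eqref{eqn: lemma convergence 3 P iteration} bounded, passing to the limit shows $P_\infty$ solves \eqref{eqn: GARE}, and uniqueness of the minimal (stabilizing) positive semidefinite solution forces $P_\infty=P$.

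\textbf{Stage 3 (propagation).} The maps \eqref{eqn: lemma convergence 2 eq1}, \eqref{eqn: rewrite K_{1,i} using H}, \eqref{eqn: rewrite K_{2,i} using H} expressing $H_{i+1},K_{1,i+1},K_{2,i+1}$ in terms of $P_i$ are rational and continuous in a neighborhood of $P$ (all indicated inverses exist there because $R_{11},R_{12}\succ 0$ and $H_{uu},H_{vv}\succ 0$), so $P_i\to P$ implies $H_i\to H$ as in \eqref{eqn: H}, $K_{1,i}\to K_1$ and $K_{2,i}\to K_2$, and therefore $Q_1^{\pi_i}(x_k,u_k,v_k)=z_k^TH_i z_k\to z_k^T H z_k=Q_1^{\pi^t}(x_k,u_k,v_k)$ pointwise; combined with the cited least-squares consistency $\hat h_i\to h_i$ this establishes the stated convergence of Algorithm~\ref{alg: online derivation of team-optimal solution} to the optimum.

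I expect Stage 2 to be the main obstacle. The monotonicity step and the Stage-3 continuity argument are routine, but the uniform upper bound $P_i\preceq P$ and the identification of the limit with the \emph{stabilizing} GARE solution implicitly require a stabilizability/detectability-type hypothesis that the paper states only loosely; one must also check that the block matrix inverted in \eqref{eqn: lemma convergence 3 P iteration} stays nonsingular along the whole sequence, which holds since $P_i\succeq 0$ and $R_{11},R_{12}\succ 0$.
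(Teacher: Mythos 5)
Your proposal is correct and follows the same skeleton as the paper: both reduce Algorithm~\ref{alg: online derivation of team-optimal solution} to the matrix recursion \eqref{eqn: convergence 1 lemma result} via Lemma~\ref{lemma: convergence 1}, and then to the $P$-recursion \eqref{eqn: lemma convergence 3 P iteration} via Lemmas~\ref{lemma: convergence 2} and~\ref{lemma: convergence 3}. Where you diverge is in how the core convergence step is handled: the paper disposes of it in one sentence by citing \cite{prokhorov1997adaptive} for the fact that the iteration \eqref{eqn: lemma convergence 3 P_{i+1}} from $P_0=0$ converges to the solution of \eqref{eqn: GARE}, whereas you supply the underlying argument yourself --- identifying \eqref{eqn: lemma convergence 3 P iteration} as Riccati value iteration for the single-player discounted LQR with stacked input $w_k=[u_k^T\ v_k^T]^T$, and invoking monotonicity of the Bellman operator plus the uniform bound $P_i \preceq P$ (the $i$-stage truncated cost is dominated by the infinite-horizon cost) to get monotone convergence to the minimal positive semidefinite fixed point. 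This buys two things the paper's citation-only proof does not make visible: it exposes that the recursion is value iteration rather than policy iteration in the strict sense (despite the paper's terminology), and it surfaces the stabilizability/detectability hypothesis on $(\sqrt{\gamma}A,\sqrt{\gamma}[B_1\ B_2])$ needed for the upper bound and for identifying the limit with the stabilizing GARE solution, which the paper never states. Your Stage~3 continuity propagation is also made explicit where the paper simply appeals to the equivalence in Lemma~\ref{lemma: convergence 3}. One cosmetic caveat: \eqref{eqn: GARE} and \eqref{eqn: lemma convergence 3 P iteration} as printed omit the additive $Q_1$ term and the inverse on the central block (compare \eqref{eqn: lemma convergence 3 P_{i+1}}); your argument correctly targets the intended corrected recursion.
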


\begin{proof}
    In \cite{prokhorov1997adaptive}, it is shown that the iterating process of \eqref{eqn: lemma convergence 3 P_{i+1}}, starting from $P_0=0$, would converges the solution of \eqref{eqn: GARE}, i.e., $P$, corresponding to the state-value function $V_1^{\pi^t}$. Since Lemma \ref{lemma: convergence 3} proves that $H_i \rightarrow H$ is equivalent to $P_i \rightarrow P$, and $H_0$ implies that $P_0=0$ (based on \eqref{eqn: H and P relation}), we have $H_i \rightarrow H$, where $H$ corresponds to the optimal action-value function $Q_1^{\pi^t}$ (as in \eqref{eqn: optimal H}).
\end{proof}

\subsection{Follower's optimization problem}

By observing \eqref{eqn: P_v} in Lemma \ref{lemma: follower optimal strategy} and \eqref{eqn: ARE of team-optimal} in Lemma \ref{lemma: team-optimal}, we notice the similar structure of two AREs for joint optimization and follower's optimization problems, respectively. Besides, according to the closed form of incentive matrix $M$, the terms depending on the system dynamics are $\gamma A^T P_v B_1$, $\gamma B_1^T P_v B_1$, $B_2^T P_v B_1$, $A^T P_v B_2$, $B_1^T P_v B_2$, and $B_2^T P_v B_2$, which can be easily retrieved by $H_{xu}$, $H_{uu}$, $H_{vu}$, $H_{xu}$, $H_{uv}$ and $H_{vv}$ in \eqref{eqn: H} where matrix $P$ is replaced with $P_v$.

Thus, unlike solving for the previous team-optimal solution, we do not need to formulate and estimate a separate $H$ for deriving an estimate of $M$. Instead, we can directly utilize the results of the team-optimal problem including the convergence proof Theorem \ref{theorem: convergence of policy iteration for team-optimal solution} and Algorithm \ref{alg: online derivation of team-optimal solution} by replacing the leader's cost function coefficients with the follower's.

Then, Algorithm \ref{alg: online derivation of M} is proposed for deriving the estimate of $M$, denoted as $\hat{M}_i$, in an online and model-free fashion. Based on Theorem \ref{theorem: convergence of policy iteration for team-optimal solution}, it is readily to prove that $\hat{M}_i \rightarrow M$ as $N \rightarrow \infty$ and $i \rightarrow \infty$, where $M$ is the optimal solution as defined in \eqref{eqn: M}.

\begin{algorithm}
\caption{Online Derivation of Incentive Matrix using Q-learning-based ADP}
\label{alg: online derivation of M}
\begin{algorithmic}
\Require $Q_2$, $R_{21}$, $R_{22}$ (coefficient matrices of follower's cost function), $H_0$, $x_0$, $\epsilon$, $K_1$ and $K_2$ derived by using Algorithm \ref{alg: online derivation of team-optimal solution};
\Ensure optimal $M$;
\State \textbf{Initialization, and Step 1 $\sim$ Step 3}: Same as Algorithm \ref{alg: online derivation of team-optimal solution}
\If{$\| \hat{h}_{i+1} - \hat{h}_{i} \| > \epsilon$}
    \State $i \leftarrow i + 1$, \text{go back to} \textbf{Step 2};
\ElsIf{$\| \hat{h}_{i+1} - \hat{h}_{i} \| \leq \epsilon$}
    \text{move forward to} \textbf{Step 4};
\EndIf
\State \textbf{Step 4}: Reconstruction of $M$
\State Derive $M$ based on \eqref{eqn: M} and \eqref{eqn: H}
\end{algorithmic}
\end{algorithm}

\section{Conclusion}
\label{section: Conclusions}

In this paper, motivated by the physical security concerns in Industrial Control and Power Systems (ICPS), we address the incentive Stackelberg game for the resilient controller. We establish a sufficient condition for the existence of an incentive Stackelberg solution, along with the closed-form expression for the incentive matrix, given the known system dynamics. Furthermore, we introduce a Q-learning-based Adaptive Dynamic Programming (ADP) approach to determine the incentive Stackelberg solution without requiring knowledge of the system dynamics. Two algorithms are proposed to online derive the team-optimal solution and the incentive matrix, respectively. The convergence of both algorithms to the optimum solution is proven.

The major limitation of the current approach comes from the Assumption \ref{assumption: information structure}. Although Assumption \ref{assumption: information structure} seems restrictive, it is realistic in the practical physical security scenario. 
Additionally, it could serve as a foundation for exploring non-trivial extensions by incorporating alternative representations, as demonstrated in \cite{basar1979closed} and \cite{li2002approach, lin2022incentive}.

\bibliographystyle{IEEEtran}
\bibliography{main.bib}

\begin{thebibliography}{10}
\providecommand{\url}[1]{#1}
\csname url@samestyle\endcsname
\providecommand{\newblock}{\relax}
\providecommand{\bibinfo}[2]{#2}
\providecommand{\BIBentrySTDinterwordspacing}{\spaceskip=0pt\relax}
\providecommand{\BIBentryALTinterwordstretchfactor}{4}
\providecommand{\BIBentryALTinterwordspacing}{\spaceskip=\fontdimen2\font plus
\BIBentryALTinterwordstretchfactor\fontdimen3\font minus \fontdimen4\font\relax}
\providecommand{\BIBforeignlanguage}[2]{{%
\expandafter\ifx\csname l@#1\endcsname\relax
\typeout{** WARNING: IEEEtran.bst: No hyphenation pattern has been}%
\typeout{** loaded for the language `#1'. Using the pattern for}%
\typeout{** the default language instead.}%
\else
\language=\csname l@#1\endcsname
\fi
#2}}
\providecommand{\BIBdecl}{\relax}
\BIBdecl

\bibitem{dafflon2021challenges}
B.~Dafflon, N.~Moalla, and Y.~Ouzrout, ``The challenges, approaches, and used techniques of cps for manufacturing in industry 4.0: A literature review,'' \emph{The International Journal of Advanced Manufacturing Technology}, vol. 113, pp. 2395--2412, 2021.

\bibitem{ji2016study}
X.~Ji, G.~He, J.~Xu, and Y.~Guo, ``Study on the mode of intelligent chemical industry based on cyber-physical system and its implementation,'' \emph{Advances in Engineering software}, vol.~99, pp. 18--26, 2016.

\bibitem{zhang2021smart}
H.~Zhang, B.~Liu, and H.~Wu, ``Smart grid cyber-physical attack and defense: A review,'' \emph{IEEE Access}, vol.~9, pp. 29\,641--29\,659, 2021.

\bibitem{narula2022novel}
G.~Narula, P.~Nagrath, D.~Hans, and A.~Nayyar, ``Novel defending and prevention technique for man-in-the-middle attacks in cyber-physical networks,'' \emph{Cyber-Physical Systems: Foundations and Techniques}, pp. 147--177, 2022.

\bibitem{evers2017security}
K.~Evers, R.~Oram, S.~El-Tawab, M.~H. Heydari, and B.~B. Park, ``Security measurement on a cloud-based cyber-physical system used for intelligent transportation,'' in \emph{2017 IEEE International Conference on Vehicular Electronics and Safety (ICVES)}.\hskip 1em plus 0.5em minus 0.4em\relax IEEE, 2017, pp. 97--102.

\bibitem{elhabashy2019cyber}
A.~E. Elhabashy, L.~J. Wells, and J.~A. Camelio, ``Cyber-physical security research efforts in manufacturing--a literature review,'' \emph{Procedia manufacturing}, vol.~34, pp. 921--931, 2019.

\bibitem{kazemi2021finite}
Z.~Kazemi, A.~A. Safavi, M.~M. Arefi, and F.~Naseri, ``Finite-time secure dynamic state estimation for cyber--physical systems under unknown inputs and sensor attacks,'' \emph{IEEE transactions on systems, man, and cybernetics: systems}, vol.~52, no.~8, pp. 4950--4959, 2021.

\bibitem{li2023efficient}
Z.~Li and Y.~Mo, ``Efficient secure state estimation against sparse integrity attack for regular linear system,'' \emph{International Journal of Robust and Nonlinear Control}, vol.~33, no.~1, pp. 209--236, 2023.

\bibitem{yuan2013resilient}
Y.~Yuan, Q.~Zhu, F.~Sun, Q.~Wang, and T.~Ba{\c{s}}ar, ``Resilient control of cyber-physical systems against denial-of-service attacks,'' in \emph{2013 6th International Symposium on Resilient Control Systems basar1979closed(ISRCS)}.\hskip 1em plus 0.5em minus 0.4em\relax IEEE, 2013, pp. 54--59.

\bibitem{yuan2016resilient}
Y.~Yuan, F.~Sun, and H.~Liu, ``Resilient control of cyber-physical systems against intelligent attacker: a hierarchal stackelberg game approach,'' \emph{International Journal of Systems Science}, vol.~47, no.~9, pp. 2067--2077, 2016.

\bibitem{sun2019resilient}
Q.~Sun, K.~Zhang, and Y.~Shi, ``Resilient model predictive control of cyber--physical systems under dos attacks,'' \emph{IEEE Transactions on Industrial Informatics}, vol.~16, no.~7, pp. 4920--4927, 2019.

\bibitem{zhao2021adaptive}
Y.~Zhao, X.~Du, C.~Zhou, Y.-C. Tian, X.~Hu, and D.~E. Quevedo, ``Adaptive resilient control of cyber-physical systems under actuator and sensor attacks,'' \emph{IEEE Transactions on Industrial Informatics}, vol.~18, no.~5, pp. 3203--3212, 2021.

\bibitem{cook2023survey}
M.~Cook, A.~Marnerides, C.~Johnson, and D.~Pezaros, ``A survey on industrial control system digital forensics: challenges, advances and future directions,'' \emph{IEEE Communications Surveys \& Tutorials}, 2023.

\bibitem{ghaeini2016hamids}
H.~R. Ghaeini and N.~O. Tippenhauer, ``Hamids: Hierarchical monitoring intrusion detection system for industrial control systems,'' in \emph{Proceedings of the 2nd ACM Workshop on Cyber-Physical Systems Security and Privacy}, 2016, pp. 103--111.

\bibitem{hicks1935marktform}
J.~R. Hicks, ``Marktform und gleichgewicht,'' 1935.

\bibitem{ho1982control}
Y.-C. Ho, P.~B. Luh, and G.~J. Olsder, ``A control-theoretic view on incentives,'' \emph{Automatica}, vol.~18, no.~2, pp. 167--179, 1982.

\bibitem{ahmed2017constrained}
M.~Ahmed, H.~Mukaidani, and T.~Shima, ``-constrained incentive stackelberg games for discrete-time stochastic systems with multiple followers,'' \emph{IET Control Theory \& Applications}, vol.~11, no.~15, pp. 2475--2485, 2017.

\bibitem{mukaidani2020robust}
H.~Mukaidani, R.~Saravanakumar, and H.~Xu, ``Robust incentive stackelberg strategy for markov jump linear stochastic systems via static output feedback,'' \emph{IET Control Theory \& Applications}, vol.~14, no.~9, pp. 1246--1254, 2020.

\bibitem{basar1979closed}
T.~Basar and H.~Selbuz, ``Closed-loop stackelberg strategies with applications in the optimal control of multilevel systems,'' \emph{IEEE Transactions on Automatic Control}, vol.~24, no.~2, pp. 166--179, 1979.

\bibitem{li2002approach}
M.~Li, J.~Cruz, and M.~A. Simaan, ``An approach to discrete-time incentive feedback stackelberg games,'' \emph{IEEE Transactions on Systems, Man, and Cybernetics-Part A: Systems and Humans}, vol.~32, no.~4, pp. 472--481, 2002.

\bibitem{lin2022incentive}
Y.~Lin, W.~Gao, and W.~Zhang, ``Incentive feedback stackelberg strategy for stochastic systems with state-dependent noise,'' \emph{Journal of the Franklin Institute}, vol. 359, no.~5, pp. 2058--2072, 2022.

\bibitem{bacsar1998dynamic}
T.~Ba{\c{s}}ar and G.~J. Olsder, \emph{Dynamic noncooperative game theory}.\hskip 1em plus 0.5em minus 0.4em\relax SIAM, 1998.

\bibitem{medanic1978closed}
J.~Medanic, ``Closed-loop stackelberg strategies in linear-quadratic problems,'' \emph{IEEE Transactions on Automatic Control}, vol.~23, no.~4, pp. 632--637, 1978.

\bibitem{mukaidani2018incentive}
H.~Mukaidani and H.~Xu, ``Incentive stackelberg games for stochastic linear systems with h-inf constraint,'' \emph{IEEE Transactions on Cybernetics}, vol.~49, no.~4, pp. 1463--1474, 2018.

\bibitem{brewer1978kronecker}
J.~Brewer, ``Kronecker products and matrix calculus in system theory,'' \emph{IEEE Transactions on circuits and systems}, vol.~25, no.~9, pp. 772--781, 1978.

\bibitem{nashed1974convergence}
M.~Z. Nashed and G.~Wahba, ``Convergence rates of approximate least squares solutions of linear integral and operator equations of the first kind,'' \emph{Mathematics of Computation}, vol.~28, no. 125, pp. 69--80, 1974.

\bibitem{fogel1981fundamental}
E.~Fogel, ``A fundamental approach to the convergence analysis of least squares algorithms,'' \emph{IEEE Transactions on Automatic Control}, vol.~26, no.~3, pp. 646--655, 1981.

\bibitem{prokhorov1997adaptive}
D.~V. Prokhorov and D.~C. Wunsch, ``Adaptive critic designs,'' \emph{IEEE transactions on Neural Networks}, vol.~8, no.~5, pp. 997--1007, 1997.

\end{thebibliography}



\end{document}